\setlist{itemsep=0em}
\newcommand{\Osymbol}{{\mathcal O}}
\newcommand{\BO}[1]{\Osymbol\left(#1\right)}
\newcommand{\E}[1]{\textrm{\bf E}\left[#1\right]}
\renewcommand{\Pr}[1]{\textrm{\bf Pr}\left[#1\right]}
\newcommand{\full}[1]{{#1}}
\newcommand{\short}[1]{}
\newtheorem{lemma}{Lemma}
\newtheorem{theorem}{Theorem}
\newtheorem{definition}{Definition}
\newtheorem{corollary}{Corollary}
\newtheorem{exmp}{Example}[section]
\begin{document}


\title{Scalability and Total Recall with Fast CoveringLSH
\titlenote{The research leading to these results has received funding from the European Research Council under the EU 7th Framework Programme, ERC grant agreement no.~614331.}
}

\numberofauthors{2} 
\author{
\alignauthor
Ninh Pham\\
       \affaddr{IT University of Copenhagen}\\
       \affaddr{Denmark}\\
       \email{ndap@itu.dk}
\alignauthor
Rasmus Pagh\\
       \affaddr{IT University of Copenhagen}\\
       \affaddr{Denmark}\\
       \email{pagh@itu.dk}
}

\maketitle

\begin{abstract}

Locality-sensitive hashing (LSH) has emerged as the dominant algorithmic technique for similarity search with strong performance guarantees in high-dimensional spaces. A drawback of traditional LSH schemes is that they may have \emph{false negatives}, i.e., the recall is less than 100\%. This limits the applicability of LSH in settings requiring precise performance guarantees. 
Building on the recent theoretical ``CoveringLSH'' construction that eliminates false negatives, we propose a fast and practical covering LSH scheme for Hamming space called \emph{Fast CoveringLSH (fcLSH)}.
Inheriting the design benefits of CoveringLSH our method avoids false negatives and always reports all near neighbors.
Compared to CoveringLSH we achieve an asymptotic improvement to the hash function computation time from $\BO{dL}$ to $\BO{d + L\log{L}}$, where $d$ is the dimensionality of data and $L$ is the number of hash tables. 
Our experiments on synthetic and real-world data sets demonstrate that \emph{fcLSH} is comparable (and often superior) to traditional hashing-based approaches for search radius up to 20 in high-dimensional Hamming space.

\end{abstract}




\section{Introduction}

Similarity search is a fundamental ingredient in algorithms for a wide range of computer applications, including machine learning, database management, information retrieval, and pattern recognition and analysis. 
This problem has become increasingly important and challenging in the era of big data since the use of computational resources such as storage and power becomes critical. \full{For instance, a typical search engine needs to crawl and index billions of web pages which accumulate to a multi-terabyte database~\cite{Manku_WWW07}.}
Content-based image retrieval systems now have to answer similarity queries over billion-size image databases~\cite{Zhang_TOMM13}. Large-scale collaborative filtering engines have to deal with tens of millions users' data~\cite{Das_WWW07}. The emergence of big data adds to both research and commercial applications the challenges of \textit{scale} and \textit{accuracy} for efficient similarity search.

In most such applications data can be represented or approximated as high-dimensional binary vectors, and Hamming distance is used as a similarity measure. For instance, a near-duplicate detection system uses hashing techniques\full{~\cite{Charikar_STOC02, Li_WWW10, Mitzenmacher_WWW14}}\short{~\cite{Charikar_STOC02, Li_WWW10}} to represent documents as binary vectors, and identifies them as near-duplicates if their Hamming distances are smaller than a threshold radius. In content-based image retrieval systems, a standard approach is to learn short binary codes to represent image objects such that the Hamming distance between codes reflects their neighborhood or semantic similarity in the original space\full{~\cite{Jegou_PAMI11, Salakhutdinov_IJAR09, Torralba_CVPR08, Weiss_NIPS08}}\short{, see e.g.~\cite{Salakhutdinov_IJAR09} and its references}. Retrieving similar images can be efficiently done by simply returning all images with codes within a small Hamming distance of the code of the query image.

Similarity search in Hamming space dates back to Minsky and Papert~\cite{Minsky_Papert}, who referred to it as the \emph{approximate dictionary} problem. The generalization to arbitrary spaces is now known as \textit{near neighbor search}. Due to the ``curse of dimensionality'', the performance of indexing techniques based on data or space partitioning generally degrades as dimensionality increases, and is eventually no better than a simple linear search~\cite{Weber_VLDB98}. \full{This poses a problem of scale for near neighbor search in applications dealing with a very large number of bit strings that might not even fit in the main memory of one machine.}

Since~1998 \textit{locality-sensitive hashing} (LSH)~\cite{Indyk_STOC98} has emerged as a basic primitive for near neighbor search in high-dimensional space. It alleviates the effects of the ``curse of dimensionality'' by considering an \textit{approximate} variant, and obtains sub-linear time for the approximation problem.\full{ In a nutshell, LSH hashes similar points into the same bucket with with high probability, and increases the gap between collision probability of similar and dissimilar points. The search candidates are data points that are hashed into the same bucket as the query point.
}
\full{Since its first introduction, several LSH schemes~\cite{Broder_STOC98, Charikar_STOC02, Datar_SOCG04, Indyk_STOC98, Li_WWW10, Weiss_NIPS08} and efficient LSH-based methods for near neighbor search~\cite{Bawa_WWW05, Dasgupta_KDD11, Gan_SIGMOD12, Gao_KDD15, Gionis_VLDB99, Liu_VLDB14, Lv_VLDB07, Panigrahy_SODA05, Satuluri_VLDB12, Shrivastava_NIPS14, Sundaram_VLDB13, Tao_SIGMOD09} have been proposed for a wide range of distance functions in high-dimensional space.}
\short{Since its first introduction, many LSH-based methods for near neighbor search have been proposed. We refer to~\cite{Gao_KDD15,Shrivastava_NIPS14} for an up-to-date overview of  developments.}
However, a drawback of classical LSH-based methods is the probabilistic guarantees that result in false negatives (i.e., the recall is below 100\%). This limits 
applicability of LSH in settings requiring high accuracy or precise performance guarantees, e.g., fingerprint recognition, entity resolution, and plagiarism detection.

Although the requirement of perfect recall ratio has not often been the primary focus when studying similarity search in Hamming space, there are many applications where this setting is relevant. For the problem of large-scale image search and recognition, learning binary codes for images to preserve their neighborhood or semantic similarity\full{~\cite{Norouzi_NIPS12, Salakhutdinov_IJAR09, Torralba_CVPR08}}\short{ (see~\cite{Norouzi_NIPS12} and its references)} is widely used due to the simplicity of the representation and fast query processing. 
False negative findings in querying a binary code can degrade the performance of classification and retrieval tasks. In fact, such methods often perform brute-force search for answering near neighbor queries. Recently researchers have found that the binary codes must be long enough (hundreds of bits) to preserve discrimination power and to achieve good performance\full{~\cite{Gong_CVPR13, Perronnin_CVPR10, Yu_ICML14}}\short{, see~\cite{Yu_ICML14} and its references}. As such applications arise in large-scale image data sets, the problem of scaling up similarity search in high-dimensional Hamming space is getting more important and more challenging. 

In a recent theoretical study, \textit{CoveringLSH}~\cite{Pagh_SODA16} was  proposed to address the issue of false negatives in LSH for Hamming space. 
Instead of independently selecting bit positions from high-dimensional binary vectors as the classic LSH method~\cite{Indyk_STOC98}, CoveringLSH carefully chooses correlated bit positions that ``cover'' all possible positions of $r$ differences, and thus eliminate false negatives. 
\full{To explore the practicality of this approach, we implemented the CoveringLSH construction and carried out an experimental study. We found that although the method can avoid false negatives and match the asymptotic complexity bound of classical LSH~\cite{Indyk_STOC98}, substantial practical improvements are possible.}
An issue of CoveringLSH is that it requires the computation of $L$ hash values of $d$ bits, where $d$ is the dimensionality of data and $L$ is the number of hash tables.
This becomes a bottleneck for large dimensions, since evaluation time proportional to $dL$ is unavoidable.

{\bf Fast CoveringLSH.} 
This paper presents \emph{Fast CoveringLSH} (fcLSH), a fast and practical evolution of CoveringLSH that scales much better to high dimensions.
Inheriting the design benefits of CoveringLSH, fcLSH can not only answer \textit{approximate} near neighbor search with provable sub-linear guarantees, but also report the \textit{exact} set of all near neighbors. 
\full{Our method is the first practical solution, to the best of our knowledge, to bridge the gap between approximate computation and exact results for similarity search in high-dimensional Hamming space.}
In addition, for low dimensions where $d \leq L$, fcLSH achieves higher precision  than CoveringLSH. 
Our experiments on synthetic and real-world data sets demonstrate that fcLSH is comparable and often superior to traditional hashing-based approaches for search radius up to~20 in high-dimensional Hamming space.

{\bf Technical contributions.} Observe that for $d\gg \log(n)$ we can decrease the size of the hash values from $d$ to $\BO{\log n}$ bits each, while not significantly changing collision probabilities, by applying universal hashing~\cite{Carter_STOC77}.
In order to avoid intermediate results of $dL$ bits we show how to \emph{interweave} a carefully chosen universal hash function with the Fast Hadamard Transform, such that $L$ hash values of $\BO{\log n}$ bits are computed directly.
Since the Hadamard matrix is related to the projection family used by CoveringLSH, the values computed in this way are identical to those obtained by hashing the $d$-bit hash values to $\BO{\log n}$ bits.
This approach achieves an asymptotic improvement to hash function computation time from $\BO{dL}$ to $\BO{d + L\log{L}}$, \full{where $d$ is the dimensionality of data and $L$ is the number of hash tables}\short{for $d$ dimensions and $L$ hash tables}. 

\full{
The organization of the paper is as follows. In Section~\ref{sec:preliminaries}, we describe background and preliminaries, including near neighbor search problems, an overview of LSH, the very recent CoveringLSH scheme, and some background on Hadamard codes. The proposed approach is presented and analyzed in Section~\ref{sec:algorithm}. In Section~\ref{sec:experiment}, we show experimental evaluations of our proposed approach on both synthetic and real-world data sets. Section~\ref{sec:relatedwork} briefly reviews related work. Section~\ref{sec:conclusion} summarizes the paper and presents research directions concerned with CoveringLSH scheme.}

\section{Background and Preliminaries}\label{sec:preliminaries}
\subsection{Problem Setting}

We study the problem of near neighbor search in Hamming space under Hamming distance. Due to the ``curse of dimensionality'', many proposed solutions for \textit{exact} near neighbor search in high-dimensional space become slower than simple linear search. In order to trade precision for speed, approximate versions of near neighbor search have been widely investigated in the literature, and locality-sensitive hashing-based methods have emerged as the most widely used solutions for such problems. The first approximate version, called $c$-approximate $r$-near neighbor search, is defined as follows.

\begin{definition}($c$-approximate $r$-near neighbor or $\left(c, r\right)$-NN)
Given a set $S \subset \left\{ 0, 1\right\}^d$, $|S| = n$, the Hamming distance function $d$, and parameters $r > 0$, $c > 1$, $\delta > 0$, construct a data structure such that, given any query $\bm{q} \in \left\{ 0, 1\right\}^d$, if there exists a point $\bm{x} \in S$ and $d(\bm{x}, \bm{q}) \leq r$, it reports some point $\bm{y} \in S$ where $d(\bm{y}, \bm{q}) \leq cr$ with probability $1 - \delta$. 
\end{definition}

We note that $(c, r)$-NN problem has two approximation factors, consisting of the approximation of distance by a factor of $c$, and the approximation of the result set determined by the success probability $1 - \delta$. Due to the approximation of distance, this problem formulation may give undesirable quality of results. By setting $c = 1$, the second approximate version, called $r$-near neighbor reporting, has more practical applications\full{~\cite{Salakhutdinov_IJAR09, Satuluri_VLDB12, Slaney_IEEE12, Sundaram_VLDB13}}\short{~\cite{Salakhutdinov_IJAR09, Sundaram_VLDB13}} since it reports \textit{all} points within distance~$r$ to the query.

\begin{definition}($r$-near neighbor reporting or $r$-NN)
Given a set $S \subset \left\{ 0, 1\right\}^d$, $|S| = n$, the Hamming distance function $d$, and parameters $r > 0$, $\delta > 0$, construct a data structure that, given any query $\bm{q} \in \left\{ 0, 1\right\}^d$, return each point $\bm{x} \in S$ where $d(\bm{x}, \bm{q}) \leq r$ with probability $1 - \delta$. 
\end{definition}

We call this the ``exact'' $r$-NN problem in case $\delta = 0$, otherwise it is the ``inexact'' $r$-NN to distinguish with the approximation term of $(c,r)$-NN problem. Note that the inexact factor here is due to reporting \textit{each} near neighbors, determined by the success probability $1 - \delta$. 

This work investigates the possibility of an \textit{exact} guarantee for $r$-NN problem in order to report every point $\bm{x} \in S$ where $d(\bm{x}, \bm{q}) \leq r$. It is worth noting that solving the exact $r$-NN problem implies an exact solution to the nearest neighbor problem with comparable performance by building several solutions for different radii~\cite{Andoni_JACM08}. For convenience of notation, we are now using a bold letter for a binary vector (e.g.,~$\bm{v}$) to distinguish it from a scalar quantity (e.g.,~$v$). In what follows, we talk about \textit{near} points at distance at most $r$ (those that should be reported), \textit{c-near} points at distance between $r$ and $cr$, an \textit{far} points at distance larger than $cr$.

\subsection{Locality-sensitive Hashing Functions}

LSH is one of the most widely used approaches to near neighbor search in high-dimensional space because it is able to break the $\BO{n}$  barrier for the $\left(c, r \right)$-NN problem.
\begin{definition}\emph{(Indyk and Motwani~\cite{Indyk_STOC98})}
\label{def:LSH}
Fix a distance function $d: \mathbb{U}\times \mathbb{U} \rightarrow {\bf R}$.
For positive reals $r$, $c$, $p_1$, $p_2$, and $p_1 > p_2$, $c > 1$, a family of functions $\mathcal{H}$ is \emph{$(r,cr,p_1,p_2)$-sensitive} if for uniformly chosen $h \in \mathcal{H}$ and all $\bm{x}, \bm{y}\in \mathbb{U}$:
\begin{itemize}[leftmargin=4ex]
	\item If $d(\bm{x}, \bm{y}) \leq r$ then $\Pr{h(\bm{x})=h(\bm{y})} \geq p_1$;
	\item If $d(\bm{x}, \bm{y}) \geq cr$ then $\Pr{h(\bm{x})=h(\bm{y})} \leq p_2$.
\end{itemize}
\end{definition}

The classic LSH family for Hamming distance uses a bit sampling approach~\cite{Gionis_VLDB99, Indyk_STOC98}. It is simply the family of all the projections of points to one dimension, i.e., a hash function value is just a random bit sample. That is, given a point $\bm{x} = \{x_1, \ldots , x_d \}$, the bit sampling LSH family $\mathcal{B}$ with parameters $p_1 = 1 - r/d$, $p_2 = 1 - cr/d$ is constructed as:
%
$$\mathcal{B} = \left\{h : \{ 0, 1\}^d \rightarrow \{0, 1\} \, | \, h(\bm{x}) = x_i \text{ for some } i \in \{1, \ldots, d\} \, \right\}.$$
%

The performance of LSH-based algorithms is governed by the parameter $\rho = \log{p_1}/\log{p_2}$, and constructing an LSH family with small $\rho$ automatically leads to the improved algorithms for the $\left(c, r \right)$-NN problem. For the bit sampling family $\mathcal{B}$, $\rho \approx 1/c$ which is optimal for data-independent LSH in Hamming space~\cite{Donnell_TOCT14}. 

The classical LSH-based algorithm for near neighbor search problem is as follows. \full{We concatenate $k$ random hash values to increase the gap of collision probability between near points and far points, and independently repeat the process $L$ times to increase the success probability of the algorithm. In particular, given an LSH family $\mathcal{H}$,}\short{Given an LSH family $\mathcal{H}$,} construct $L$ hash tables by hashing data points using $L$ hash functions $g_j$, $j = 1, \ldots, L$, by setting $g_j = \left( h_{j}^{1}, \ldots, h_{j}^{k} \right)$, where $h_j^{i}$, $i = 1, \ldots, k$, are chosen randomly from the LSH family $\mathcal{H}$. To process a query $\bm{q}$, one needs to retrieve candidate points from the bucket $g_j(\bm{q})$ in the $j$th hash table, $j = 1, \ldots, L$. For the candidate set retrieved, a filtering procedure is performed to remove false positives. There are different filtering strategies corresponding to $(c,r)$-NN and $r$-NN problems~\cite{Andoni_JACM08}.
%

%

{\bf Strategy~1:} Stop searching after finding the first $3L$ points (including duplicates) and return the point with minimum distance to the query $\bm{q}$.

{\bf Strategy~2:} For each distinct point $\bm{x}$ from the candidate set, compute $d(\bm{x}, \bm{q})$ and report $\bm{x}$ if $d(\bm{x}, \bm{q}) \leq r$.

Strategy~1 that interrupts the search after retrieving $3L$ points (including duplicates) is of significant importance in theory because it introduces a sub-linear time algorithm with suitable choices of $k$ and $L$ for the $\left(c, r \right)$-NN problem~\cite{Gionis_VLDB99, Indyk_STOC98}. In particular, it runs in $\BO{n^\rho}$ time where $\rho = \log{p_1}/\log{p_2}$ if we suitably choose $k = \BO{\log{n}}$, $L = \BO{n^\rho}$\full{, and interrupt the searching process after retrieving the first $3L$ points. Using the bit sampling family $\mathcal{B}$, it solves the $\left(c, r \right)$-NN problem in sub-linear time $\BO{n^{1/c}}$ using $\BO{n^{1 + 1/c}}$ space. Despite of the attractive asymptotic space and query performance, Strategy~1 may give undesirable quality of results compared to Strategy~2}.

Strategy~2 enables us to solve the $r$-NN problem, which has more practical applications\full{~\cite{Das_WWW07, Henzinger_SIGIR06, Manku_WWW07, Satuluri_VLDB12, Sundaram_VLDB13}}. It provides better result quality since all reported points are within distance $r$ to the query point. It might run in $\BO{n}$ time in the worst case, but for many natural data sets, proper settings of $k$ and $L$ still result in a sub-linear query time~\cite{Andoni_JACM08}. However, Strategy~2 can introduce false negatives if some near points do not collide with the query under any hash function. That limits the use of LSH in applications requiring high accuracy or precise performance guarantees.

For practical implementation\footnote{E2LSH. http://www.mit.edu/$\sim$andoni/LSH/\full{, \newline OptimalLSH. https://github.com/yahoo/Optimal-LSH}.}, the value $k = \BO{\log{n}}$ is large in a typical setting. One can reduce the time of checking collision and the amount of memory for bucket identification from $\BO{k}$ to $\BO{1}$ by using an associated universal hash function to hash a $k$-bit hash value into an integer. Moreover, since the domain of the hash function $g_j$ is too large to store all possible buckets explicitly, and we only need to store non-empty buckets, we use a hash table to contain these non-empty buckets. Given a prime $P$ and random integers $b_i$, $i = 1, \ldots, k$, from the interval $\left\{0, \ldots, P - 1 \right\}$, we use hash functions of the form:  
%
\begin{eqnarray}  
\label{eq:universal} 
p(x_1, \ldots, x_k) = b_1 \cdot x_1 + \cdots + b_k \cdot x_k \mod P.
\end{eqnarray}
According to~\cite{Carter_STOC77}, this family is universal which means that the probability of collision is small if $P$ is sufficiently large\full{ (say, $P > n^2$ when hashing a set of $n$ vectors)}.

\subsection{CoveringLSH}\label{sec:coveringLSH}

In very recent work~\cite{Pagh_SODA16}, a novel LSH scheme was proposed to solve the exact $r$-NN problem. This method always introduces a collision for every pair of binary vectors within a given radius $r$. Instead of independently selecting bit positions as in the bit sampling approach, CoveringLSH carefully chooses correlated bit positions so that it can ``cover'' \textit{all} possible positions of $r$ differences, which implies an \textit{exact} guarantee for the $r$-NN problem when used with Strategy~2. \full{The underlying LSH definition is as follows.}
\begin{definition}
\label{def:covering}
An LSH family $\mathcal{A}$ is \emph{$r$-covering} if for every two binary vectors $\bm{x}, \bm{y} \in \{0,1\}^d$ with Hamming distance $d(\bm{x}, \bm{y}) \leq r$, there exists $g \in \mathcal{A}$ such that $g(\bm{x}) = g(\bm{y})$.
\end{definition}

The proposed scheme relies on a \textit{random} mapping $m : [d] \rightarrow \{ 0, 1\}^{r+1}$ that maps bit positions to binary vectors of length $r + 1$. This $r$-covering LSH family, $\mathcal{A}$, consists of $2^{r+1} - 1$ correlated hash functions via the mapping $m$. Each hash function is associated with a binary vector of length $d$, denoted by $\bm{g}_v$, $v = 1, \ldots, 2^{r+1} - 1$ of the form
%
\begin{eqnarray}  
\label{eq:naive} 
\bm{g}_v = \left( \left\langle m(1), \bm{v} \right\rangle, \left\langle  m(2), \bm{v} \right\rangle, \cdots, \left\langle m(d), \bm{v} \right\rangle \right),
\end{eqnarray}
where $\left\langle m(i), \bm{v} \right\rangle = \sum_{j=1}^{r+1} m(i)_j v_j \mod 2$ is the dot product modulo 2 of two vectors $m(i)$ and $\bm{v}$. The hash value of a given binary vector $\bm{x}$ is simply the binary vector produced by the bit-wise \textbf{AND} operation, i.e.,~$g_v(\bm{x}) = \bm{g}_v \wedge \bm{x}$. The $2^{r+1} - 1$ hash functions of $\mathcal{A}$ correspond to all distinct non-zero binary vectors $\bm{v} \in \{ 0, 1 \}^{r+1} \setminus \{ \bm{0} \}$ or equivalently binary representations of $v \in \{1, \ldots, 2^{r+1} - 1\}$. Hence, the non-zero binary vector $\bm{v}$ or the corresponding integer $v$ is used to index the $v$th hash function $\bm{g}_v$, and we will use them interchangeably.

\begin{exmp}
Given the two binary vectors $\bm{x} = \underline{0}01\underline{1}$ and $\bm{q} = 1010$, we have that $d(\bm{x}, \bm{q}) = 2$. A 2-covering LSH family uses a random mapping $m : [4] \rightarrow \{ 0, 1\}^{3}$, e.g., $m(1) = 011, m(2) = 100, m(3) = 101, m(4) = 001$, to construct~7 hash functions as follows:
{\small
\begin{displaymath} 
\begin{aligned} 
\bm{g}_1 &= \left( \left\langle m(1), 001 \right\rangle, \left\langle  m(2), 001 \right\rangle, \left\langle m(3), 001 \right\rangle, \left\langle m(4), 001 \right\rangle \right) = 1011, \\
\bm{g}_2 &= \left( \left\langle m(1), 010 \right\rangle, \left\langle  m(2), 010 \right\rangle, \left\langle m(3), 010 \right\rangle, \left\langle m(4), 010 \right\rangle \right) = 1000, \\
\bm{g}_3 &= 0011, \bm{g}_4 = 0110, \bm{g}_5 = 1101, \bm{g}_6 = 1110, \bm{g}_7 = 0101.
\end{aligned}
\end{displaymath}
}
There is one collision between $\bm{x}$ and $\bm{q}$ corresponding to $\bm{g}_4 = 0110$ since $\bm{g}_4 \wedge \bm{x} = \bm{g}_4 \wedge \bm{q} = 0010$. Note that the 2-covering LSH family can cover all possible positions of~2 differences in~4-dimensional Hamming space.
\end{exmp}

%
\begin{theorem}\emph{~\cite[Lemma 3.2]{Pagh_SODA16}}
\label{thm:covering}
For every mapping $m : [d] \rightarrow \{0, 1\}^{r+1}$, the family $\mathcal{A}$ built as above is $r$-covering.
\end{theorem}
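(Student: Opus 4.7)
The plan is to translate the collision condition into a small system of linear equations over $\mathbb{F}_2$ and then exploit a pigeonhole/dimension argument.

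First I would unpack what it means for $g_v(\bm{x}) = g_v(\bm{y})$. Since $g_v(\bm{x}) = \bm{g}_v \wedge \bm{x}$ is a bit-wise AND, the two hash values agree precisely on the coordinates $i$ where either the $i$th bit of $\bm{g}_v$ is $0$, or $x_i = y_i$. Letting $D = \{i \in [d] : x_i \neq y_i\}$ denote the difference set (so $|D| \leq r$ by hypothesis), the collision condition becomes: the $i$th coordinate of $\bm{g}_v$ is $0$ for every $i \in D$. By the definition $\bm{g}_v = (\langle m(i), \bm{v}\rangle)_{i=1}^d$, this is equivalent to
\begin{equation*}
\langle m(i), \bm{v}\rangle = 0 \pmod 2 \quad \text{for all } i \in D.
\end{equation*}

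So the task reduces to finding a non-zero $\bm{v} \in \{0,1\}^{r+1}$ that is orthogonal (over $\mathbb{F}_2$) to each of the $|D| \leq r$ vectors $\{m(i) : i \in D\}$. I would then invoke the standard dimension count: these orthogonality constraints form a homogeneous linear system over $\mathbb{F}_2$ with at most $r$ equations in $r+1$ unknowns. The solution space has dimension at least $(r+1) - r = 1$, so it contains at least one non-zero vector $\bm{v}$. Any such $\bm{v}$ indexes a hash function $g_v \in \mathcal{A}$ under which $\bm{x}$ and $\bm{y}$ collide, which is exactly what the definition of $r$-covering requires.

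I do not expect a serious obstacle here: the only subtlety is making sure the equivalence in the first step is stated cleanly (the AND mask is $0$ on positions where $\bm{g}_v$ is $0$, regardless of $\bm x, \bm y$; and equals the underlying bit where $\bm{g}_v$ is $1$, so agreement forces $x_i = y_i$ there). Everything else is a one-line application of the rank-nullity identity over $\mathbb{F}_2$. Note that the argument uses no property of the mapping $m$ beyond its codomain being $\{0,1\}^{r+1}$, which explains why the theorem holds for \emph{every} such $m$ and opens the door to choosing $m$ at random or with extra structure (as the rest of the paper will exploit).
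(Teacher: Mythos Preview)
Your argument is correct and is the standard linear-algebraic proof of this fact. Note, however, that the present paper does not actually prove this theorem: it is stated with a citation to \cite[Lemma~3.2]{Pagh_SODA16} and no proof is given here. The only place the paper touches the underlying mechanism is in the proof of Lemma~\ref{lem:deterministic}, where it rephrases the collision condition as $\bm{g}_v \wedge \bm{z} = \bm{0}$ with $\bm{z} = \bm{x}\oplus\bm{y}$ (equivalent to your difference-set formulation) and then simply invokes Theorem~\ref{thm:covering} rather than rederiving it. Your dimension-count over $\mathbb{F}_2$ is exactly the argument one expects in the cited source, so there is nothing to correct.
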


The new $r$-covering LSH scheme can not only eliminate the problem of false negatives but also essentially \textit{match} the complexity bound of the seminal LSH construction of Indyk and Motwani~\cite{Indyk_STOC98} if $cr = \log{n}$. This is due to Theorem~\ref{thm:collision}.
\begin{theorem}\emph{~\cite[Theorem 3.1]{Pagh_SODA16}}
\label{thm:collision}
For any two binary vectors $\bm{x}, \bm{y} \in \{0, 1 \}^d$ and a random mapping $m:[d] \rightarrow \{0, 1\}^{r+1}$, $\mathcal{A}$ has two following properties:
\begin{enumerate}[leftmargin=4ex]
	\item If $d(\bm{x}, \bm{y}) \leq r$ then $\Pr {\exists g \in \mathcal{A} : g(\bm{x}) = g(\bm{y})} = 1$.
	\item $\E{\left| \left\{ g \in \mathcal{A} \, | \, g(\bm{x}) = g(\bm{y}) \right\}\right|} < 2^{r+1 - d(\bm{x}, \bm{y})}$.
\end{enumerate}
\end{theorem}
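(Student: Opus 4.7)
The plan is to dispatch Part (1) essentially for free and focus on Part (2), which is a short calculation once the collision event is described in the right language.

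For Part (1), note that Theorem~\ref{thm:covering} states that $\mathcal{A}$ is $r$-covering for \emph{every} mapping $m$, so in particular for a random one: whenever $d(\bm{x}, \bm{y}) \leq r$ some $g \in \mathcal{A}$ always satisfies $g(\bm{x}) = g(\bm{y})$, and the probability in question is literally $1$. No further work is required here.

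For Part (2), my first step would be to characterize when a single hash function $g_v$ causes a collision. Writing $D = \{i \in [d] : x_i \neq y_i\}$, so $|D| = d(\bm{x}, \bm{y})$, and recalling $g_v(\bm{z}) = \bm{g}_v \wedge \bm{z}$, one sees that $g_v(\bm{x}) = g_v(\bm{y})$ iff $(\bm{g}_v)_i = 0$ for every $i \in D$, which by~(\ref{eq:naive}) is equivalent to $\langle m(i), \bm{v} \rangle \equiv 0 \pmod 2$ for all $i \in D$. The second step is to leverage the randomness of $m$: since the values $m(i)$ are independent and uniform over $\{0,1\}^{r+1}$, and since $\bm{v} \neq \bm{0}$, the inner product $\langle m(i), \bm{v} \rangle$ is Bernoulli$(1/2)$ and these are independent across $i$. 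Hence $\Pr{g_v(\bm{x}) = g_v(\bm{y})} = 2^{-|D|} = 2^{-d(\bm{x}, \bm{y})}$. The third step is just linearity of expectation over the $2^{r+1}-1$ functions of $\mathcal{A}$, yielding
\[
\E{\left|\{g \in \mathcal{A} : g(\bm{x}) = g(\bm{y})\}\right|} = (2^{r+1}-1)\cdot 2^{-d(\bm{x}, \bm{y})} < 2^{r+1-d(\bm{x}, \bm{y})}.
\]

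There is no real obstacle here; the only place where one must be a little careful is the claim that $\langle m(i), \bm{v}\rangle$ is uniform over $\{0,1\}$, but this is a standard fact about $\mathbb{F}_2$-inner products with a uniformly random vector against a fixed nonzero vector. The strict inequality in Part (2) comes from the ``$-1$'' in $2^{r+1}-1$, which is ultimately due to excluding the zero vector $\bm{v} = \bm{0}$ from the indexing set of $\mathcal{A}$.
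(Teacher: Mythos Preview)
Your proposal is correct. Note that the paper does not give its own proof of this theorem---it is quoted from~\cite{Pagh_SODA16}---but it does sketch exactly your argument later, inside the proof of Lemma~\ref{lem:deterministic}: rewrite the collision condition as $\bm{g}_v \wedge (\bm{x}\oplus\bm{y}) = \bm{0}$, observe that each relevant bit is $0$ with probability $1/2$ independently because $\bm{v}\neq\bm{0}$, multiply to get $2^{-d(\bm{x},\bm{y})}$, and sum over the $2^{r+1}-1$ functions by linearity of expectation. Your handling of Part~(1) via Theorem~\ref{thm:covering} is also exactly how the paper packages it.
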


It is obvious that, for the setting where $cr = \log{n}$, the number of hash functions is $2^{r+1} - 1 \approx 2n^{1/c}$ and the total expected number of collisions for the far points among all hash functions is at most $2^{r} \approx n^{1/c}$. This implies an efficient sub-linear algorithm for solving the $(c, r)$-NN problem with constant success probability, like the classic LSH schemes. In addition, the $r$-covering LSH scheme with Strategy~2 will answer the $r$-NN problem with an \textit{exact} guarantee, returning \emph{all} points within distance $r$ to the query. \full{Since the constraint $cr = \log{n}$ is a key requirement of $r$-covering LSH schemes, the next section will introduce generalizations to satisfy this constraint.}

The basic $r$-covering scheme needs time $\BO{d}$ to construct one hash function (see Equation~(\ref{eq:naive})). In practice, the dimensionality of binary data can be high, e.g., documents, recommendation data sets. Also, an embedding process to Hamming space can require high dimensionality, e.g., embedding $\ell_1$-norm into Hamming space by a \textit{unary} representation~\cite{Gionis_VLDB99}, semantic hashing to embed images into Hamming space~\cite{Gong_CVPR13, Yu_ICML14}. This issue demands significant computational resources for computing $r$-covering hash codes.

\subsection{Hadamard Codes}\label{sec:hadamardcode}

The Hadamard code is an error-correcting code that enables efficient and reliable message transmission over noisy channels.\full{ The message is encoded by adding some redundant information such that, if a small part of the encoded message is corrupted, we are still able to correct it and recover the original message.} Here we aim at using Hadamard codes to construct LSH hash functions, and we will not use its error-correcting properties. Instead, we explain how to generate Hadamard codes over the binary alphabet $\{0, 1\}$ and how to leverage it to construct CoveringLSH hash functions.

Given a binary vector $\bm{v} \in \{0, 1\}^{r+1}$, the Hadamard code maps $\bm{v}$ into a binary vector $\text{Had}(\bm{v})$ of length $2^{r+1}$ using an encoding function $\text{Had} : \{0, 1\}^{r+1} \rightarrow \{0, 1\}^{2^{r+1}}$. In particular, $\text{Had}(\bm{v})$ is generated as follows:
\begin{eqnarray} \label{eq:code}
\text{Had}(\bm{v}) = \left( \left\langle a(0), \bm{v} \right\rangle, \left\langle  a(1), \bm{v} \right\rangle, \cdots, \left\langle a(2^{r+1} - 1), \bm{v} \right\rangle \right),
\end{eqnarray}
where $a(i)$, $i = 0, \ldots, 2^{r+1} - 1$, are \textit{all} possible binary vectors in $\{ 0, 1\}^{r+1}$, and $\left\langle a(i), \bm{v} \right\rangle$ is the dot product modulo 2 of two vectors $a(i)$ and $\bm{v}$.\full{ It is worth noting that the first bit of the Hadamard code corresponding to $\left\langle a(0), \bm{v} \right\rangle$ is not used in practice since $a(0) = \bm{0}$ and this bit is always zero.}

Consider the hash function vector $\bm{g}_v$ in Equation~(\ref{eq:naive}) and the Hadamard code $\text{Had}(\bm{v})$ in Equation~(\ref{eq:code}). It is observed in~\cite{Pagh_SODA16} that $\bm{g}_v$ can be seen as sampling a subset of $\text{Had}(\bm{v})$ since the random mapping $m$ is a subset of $\{0, 1\}^{r+1}$. We note that the Hadamard code of a binary vector $\bm{v}$ corresponds to the $v$th row of the so-called \emph{Hadamard matrix} ${\bf H}$  of the same size using the mapping $1 \mapsto -1$ and $0 \mapsto 1$. Conversely, we can use the Hadamard matrix of size $2^{r+1} \times 2^{r+1}$ with the reverse mapping as  Hadamard codes for vectors $\bm{v} \in \{ 0, 1\}^{r+1}$. The next section will exploit this relation and show how to use Hadamard codes and the fast Hadamard transform \texttt{FHT()} to efficiently construct $r$-covering LSH families.

\section{Algorithm}\label{sec:algorithm}


\full{\subsection{Description of fcLSH}\label{sec:description}}

\subsubsection{A typical case} 

Let us now present an example of a typical setting of image search where Hadamard codes can be used as $r$-covering LSH functions \textit{without} any modifications. Suppose that we have a set of binary vectors $S \subseteq \{0, 1 \}^8$. Given a query $\bm{q}$, we would like to find all points within distance $r = 2$ from ${\bm q}$.\full{ A 2-covering LSH family requires 7 hash functions to cover \textit{all} possible~2 differences between data points and query.} We generate Hadamard codes ${\bf C}_{7, 8}$ by using the rows of Hadamard matrix ${\bf H}_{8, 8}$ as described above, and remove the first row to avoid trivial collisions. We see that ${\bf C}_{7, 8}$ is a $2$-covering LSH family. (We \emph{will} in fact use the first column of the Hadamard matrix to simplify the fcLSH description and construction; in the practical implementation we later discard it due to its trivial collision.)
\renewcommand{\arraystretch}{0.8}
\begin{displaymath}
	{\bf C}_{7, 8} = \begin{pmatrix}
       0 & 1 & 0 & 1 & 0 & 1 & 0 & 1 \\[0.3em]
       0 & 0 & 1 & 1 & 0 & 0 & 1 & 1 \\[0.3em]
			 0 & 1 & 1 & 0 & 0 & 1 & 1 & 0 \\[0.3em]
			 0 & 0 & 0 & 0 & 1 & 1 & 1 & 1 \\[0.3em]
       0 & 1 & 0 & 1 & 1 & 0 & 1 & 0 \\[0.3em]
       0 & 0 & 1 & 1 & 1 & 1 & 0 & 0 \\[0.3em]
			 0 & 1 & 1 & 0 & 1 & 0 & 0 & 1			 
							\end{pmatrix}		
\end{displaymath}

It is obvious that there exists at least one collision for every pair of vectors within distance~2. This implies an algorithm for the $2$-NN problem without false negatives using the ${\bf C}_{7,8}$ LSH family (see Example~\ref{ex:2covering}). Note that, in this case, the mapping $m(i)$ of the $r$-covering LSH scheme is simply the vector representing $i$ in binary.
\begin{exmp}
\label{ex:2covering}
Given the two binary vectors $\bm{x} = 0011\underline{0}01\underline{1}$, $\bm{y} = 0011\underline{0}0\underline{01}$ and the query vector $\bm{q} = 00111010$, we have that $d(\bm{x}, \bm{q}) = 2$, $d(\bm{y}, \bm{q}) = 3$. Given the 2-covering LSH family ${\bf C}_{7,8}$, there is one collision between $\bm{x}$ and $\bm{q}$ (i.e., $g_3(\bm{x}) = g_3(\bm{q}) = 00100010$) corresponding to the 3rd row of ${\bf C}_{7,8}$, and there is no collision between $\bm{y}$ and $\bm{q}$.
\end{exmp}

In a typical setting for large-scale image search, suppose that we have a set $S$ of $n = 2^{24}$ vectors from $\{ 0, 1\}^{128}$. Given a query $\bm{q}$, we may wish to search all vectors in $S$ within distance $r = 6$ from $\bm{q}$. Since an exhaustive search in Hamming balls with $r = 6$ would take much more time than just linear search, we settle for a 4-approximate similarity search ($cr = \log{n}$). 

The $6$-covering LSH requires~127 hash functions and we use a random column-based permutation of the Hadamard codes ${\bf C}_{127, 128}$ as the LSH family. Theorem~\ref{thm:collision} shows that near vectors within radius~6 always collide with $\bm{q}$ in at least~1 hash function. Moreover, in expectation, a far-away vector at distance larger than~$24$ has collision probability at most~$1/2^{24} = 1/n$ under each hash function. This means that the $r$-covering LSH scheme can be used for efficiently answering the \textit{exact} $r$-NN search by pruning almost all far vectors.

\subsubsection{The general case}

We now consider the general case of $r$-covering LSH schemes for answering exact $r$-NN queries. Note that the constraint $cr = \log{n}$ affects the efficiency of $r$-covering LSH-based algorithms because it determines the pruning power. Moreover, it is also the key factor governing the ``best tradeoff'' between space and time complexity for the $(c, r$)-NN problem. Another hurdle for $r$-covering LSH schemes is that high dimensionality $d$ requires significant hash function computation time.


\full{Keep in mind that the number of hash tables of the $r$-covering scheme and the classical scheme are $2^{r+1} - 1$ and $\BO{n^{1/c}}$, respectively. The total expected number of collisions for far points with $r$-covering LSH is at most $n2^{r}/2^{cr}$, whereas that of the classic scheme is $\BO{n^{1/c}}$. So it is clear that when $cr = \log{n}$, both approaches have the same time complexity and space usage for near neighbor search.}

We use a method from~\cite{Pagh_SODA16} to handle the constraint $cr = \log{n}$. \full{It is clear that when $cr < \log{n}$, the number of hash tables is smaller but the number of collisions is larger than for the classic LSH scheme. Intuitively, we need to increase the radius $r$ by simply replicating the dimensionality of both data and query points $\left\lfloor \log{n}/cr \right\rfloor$ times (see Example~\ref{ex:preprocessing}). On the other hand, when $cr > \log{n}$, the space usage for hash tables is larger but the number of collisions is smaller than for the classic LSH scheme. In order to reduce the radius $r$ while still maintaining the exactness guarantee, we leverage the pigeonhole principle by first permuting and then partitioning the dimensions of both data and query points into $\left\lceil cr/ \log{n} \right\rceil $ parts (see Example~\ref{ex:preprocessing}). Then we independently build LSH data structures for each partition and candidate vectors are generated for each partition.} 
\short{When $cr < \log{n}$, we increase the radius $r$ by replicating each bit of both data and query points $\left\lfloor \log{n}/cr \right\rfloor$ times. On the other hand, when $cr > \log{n}$, we leverage the pigeonhole principle by first permuting and then partitioning the dimensions of both data and query points into $\left\lceil cr/ \log{n} \right\rceil $ parts (see Example~\ref{ex:preprocessing}). Then we independently build LSH data structures for each partition and the candidate vectors are generated for each partition.}
\begin{exmp}
\label{ex:preprocessing}
Given a binary vector $\bm{q} = 0011$, replicating $\bm{q}$ 2 times returns a new vector $\bm{q}^{(2)} = 00110011$. A random permutation of $\bm{q}$ gives $\bm{q}' = 0110$. Partitioning $\bm{q}'$ into 2 parts returns two vectors $\bm{q}_1 = 01, \bm{q}_2 = 10$.
\end{exmp}

After replicating or partitioning the dimensions, we use a new query radius $r'$ where $cr' \approx \log{n}$. Denote by $d'$ the new dimensionality of data and query points.  
If $d' > 2^{r'+1}$, we will need a random mapping $m: [d'] \rightarrow [2^{r'+1}]$ that randomly samples $d'$ columns from the Hadamard codes ${\bf C}_{2^{r'+1}-1, 2^{r'+1}}$ to form the $r'$-covering LSH family. On the other hand, if $d' \leq 2^{r'+1}$, we can leverage a 0-padding trick to increase the dimensionality to $2^{r'+1}$ without changing~$r'$, and simply use the Hadamard codes ${\bf C}_{2^{r'+1}-1, 2^{r'+1}}$ with columns randomly permuted as the $r'$-covering LSH family, as in the typical case above. In both cases, if $d'$ is large, it affects the hashing cost, i.e., computing the hash value and identifying the bucket corresponding to the query. The trick of converting long binary hash values into integers, see the Equation~(\ref{eq:universal}), still requires $\BO{d'2^{r'+1}}$ time. To address this problem, we propose to use the fast Hadamard transform for quickly computing integer hash values in $\BO{d' + r'2^{r'+1}}$ time, which is asymptotically faster when $d' > r'$.

\subsection{Construction}\label{sec:construction}

As elaborated above, we need to satisfy the constraint $cr = \log{n}$ in order to achieve high pruning power like the classic LSH scheme. We handle this issue by simply replicating or partitioning the dimensionality of both data and query points to increase or decrease the radius $r$ to be approximately $\log{(n)}/c$, as illustrated in Algorithm~\ref{alg:preprocessing}. 
%
\begin{algorithm}[t]
\caption{Pre-processing algorithm}
\label{alg:preprocessing} 									
\begin{algorithmic} [1]
\REQUIRE {A vector $\bm{q} = \{q_1, \ldots, q_d \}$, radius $r > 0$, approximation ratio $c > 1$, and data set size $n$} 
\IF {$cr < \log{n}$}	
	\STATE {$\bm{q}$ is replicated $t = \left\lfloor \log{(n)}/cr \right\rfloor$ times to form a new vector
	$\bm{q}^{(t)} = \underbrace{\bm{q} \ldots \bm{q}}_{t \text{ times}}$}	
\ELSIF {$cr > \log{n}$}
	\STATE { Randomly permute $\bm{q}$ }
	\STATE {$\bm{q}$ is partitioned into $t = \left\lceil cr/\log{n} \right\rceil$ parts to form $t$ new vectors: \\
	$\bm{q}_1 = \{q_1, \ldots , q_{\left\lfloor d/t \right\rfloor}\}, \ldots, \bm{q}_t = \{q_{(t-1)\left\lfloor d/t \right\rfloor}, \ldots, q_d \}$}
\ENDIF
\normalsize
\end{algorithmic}
\end{algorithm}

For simplicity of notation, let us denote by $d$ and $r$ the new dimensionality of data and the new query radius, respectively, after pre-processing data to satisfy $cr \approx \log{n}$. We now present two variants of the fcLSH scheme: a general construction using a random mapping $m : [d] \rightarrow [2^{r+1}] $ for $d > 2^{r+1}$ as introduced in~\cite{Pagh_SODA16} and a specific construction using a random permutation $m : [2^{r+1}] \rightarrow [2^{r+1}]$ for $d \leq 2^{r+1}$. In both cases, we exploit the fast Hadamard transform for fast computation of hash functions.

\textbf{The general construction for \boldmath$d > 2^{r+1}$.}
Recall that the basic $r$-covering LSH family requires $L = 2^{r+1}-1$ hash functions and the construction of a hash function $g_v$ relies on a random mapping $m: [d] \rightarrow \{0, 1\}^{r+1}$ and dot products modulo 2 between $m(i)$ and $\bm{v}$, described in Equation~(\ref{eq:naive}). This procedure is identical to randomly sampling $d$ positions among $2^{r+1}$ positions from $\text{Had}(\bm{v})$, the Hadamard code of the vector $\bm{v}$. This implies that we can use a new random \emph{mapping} $m : [d] \rightarrow [2^{r+1}]$ and rely on a simple construction without computing $d$ dot products as follows.
\short{\vspace{-2mm}}
\begin{eqnarray} 
\label{eq:hadamardcode}
\bm{g}_v = \{ \text{Had}(\bm{v})_{m(1)}, \text{Had}(\bm{v})_{m(2)}, \ldots, \text{Had}(\bm{v})_{m(d)}\}.
\end{eqnarray}
\short{\vspace{-2mm}}

\textbf{The specific construction for \boldmath$d \leq 2^{r+1}$.}
It is obvious that any collision caused by the random mapping $m$ yields more collisions for both close points and far points. That might slightly degrade the performance of filtering mechanisms. In typical settings of content-based image retrieval applications where $d \leq 2^{r+1}$, we can combine the 0-padding trick with a random \emph{permutation} $m :[2^{r+1}] \rightarrow [2^{r+1}]$ over columns of the Hadamard codes ${\bf C}_{2^{r+1}-1, 2^{r+1}}$ to achieve better results than the construction in Equation~(\ref{eq:hadamardcode}). This idea is illustrated in Algorithm~\ref{alg:FHT} (lines 7--8).

\textbf{Fast computation of hash functions.} We use a conventional hash function to map a binary hash value of length $d$ into an integer hash value in order to reduce the amount of memory for bucket identification and time complexity of searching a bucket in a hash table. A na{\"i}ve approach \full{to convert~$L$ binary hash codes into $L$ integers asymptotically} requires $\BO{dL}$ time complexity, see Equation~(\ref{eq:universal}). We show that we can reduce this cost to $\BO{d + L\log{L}}$ by using the fast Hadamard transform \texttt{FHT()}\full{. The pseudocode in Algorithm~\ref{alg:FHT} shows how to efficiently construct the $r$-covering LSH family and compute hash values for any data point.}\short{, as illustrated in Algorithm~\ref{alg:FHT}.} 
\begin{algorithm}[t]
\caption{Generating hash values using the fast Hadamard transform \texttt{FHT()}}
\label{alg:FHT} 									
\begin{algorithmic} [1]
\REQUIRE {A point $\bm{q} \in \{0, 1 \}^d$, a prime $P$} 
\ENSURE {A vector ${\bm h}$ of $L = 2^{r+1}-1$ integer hash values}
\STATE {Pick a random integer-valued vector $\bm{b} \in [P]^{d}$}
\STATE {Compute a new integer-valued vector $\bm{\widetilde{q}} = \bm{q} * \bm{b}$ by component-wise multiplication}
\IF {$d > 2^{r+1}$}	
	\STATE {Pick a random mapping $m : [d] \rightarrow [2^{r+1}]$}		
	\STATE {Compute a sketch vector ${\bm t}$ where $t_j =\sum_{i:m(i)=j}{\bm{\widetilde{q}}_i} $}	
\ELSE
	\STATE {Pick a random permutation $m : [2^{r+1}] \rightarrow [2^{r+1}]$}		
	\STATE {${\bm t} := m(\bm{\widetilde{q}} \enspace ; \bm{0}^{2^{r+1} - d})$}	
\ENDIF	
	\STATE {${\bm h} = \frac{1}{2}\left( \left\|\bm{\widetilde{q}} \right\|_1 {\bf 1} - \texttt{FHT}(\bm{t}) \right) \mod P$}
	\STATE {Remove the first element from ${\bm h}$}
\normalsize
\end{algorithmic}
\end{algorithm}

\full{The algorithm for quickly generating hash values works as follows.} We generate a random seed vector~$\bm{b}$ to convert binary hash codes into integers (line~1) and compute a new vector $\bm{\widetilde{q}} = \bm{q} * \bm{b}$ by component-wise multiplication, i.e., $(\bm{q} * \bm{b})_i = q_i b_i$. 
Non-zero entries of $\bm{\widetilde{q}}$ correspond to 1s in $\bm{q}$. If the dimensionality of data is greater than the length of Hadamard codes, i.e.,~$2^{r+1}$, we evaluate the random mapping~$m$ on each dimension~$i$ of $\bm{\widetilde{q}}$, and sum up colliding entries to form the new sketch vector $\bm{t}$ of length $2^{r+1}$ (line~5). Otherwise, we apply 0-padding trick on $\bm{\widetilde{q}}$ and randomly permute it to get $\bm{t}$ (line~8). We note that applying a random permutation on $\bm{q}$ is equivalent to applying a random permutation on the Hadamard codes, because we are only concerned with collisions. $\texttt{FHT}(\bm{t})$ is then used to reduce the cost of computing $2^{r+1}$ integer hash values (line~10). Finally, we ignore the first element corresponding to the first row of the Hadamard matrix to get $L = 2^{r+1}-1$ integer hash values (line~11).
 
\full{Section~\ref{sec:analysis} will present our theoretical analysis of the correctness of Algorithm~\ref{alg:FHT}. It also shows that the fcLSH scheme provided by Algorithm~\ref{alg:FHT} is an efficient $r$-covering LSH scheme for near neighbor search problems.}

\textbf{Time complexity analysis.} 
\full{We now analyze the time complexity of the Algorithm~\ref{alg:FHT}.} Denote by $nnz({\bm q})$ the number of non-zero entries of vector $\bm{q}$. The running time at line~11 using the fast Hadamard transform \texttt{FHT()} is $\BO{L\log{L}}$. The other computational costs are bounded by $\BO{nnz({\bm q})}$. The total running time is $\BO{nnz({\bm q}) + L\log{L}}$, which can be compared to $\BO{nnz({\bm q})L}$ of the basic $r$-covering LSH scheme~\cite{Pagh_SODA16}. When $nnz({\bm q}) > \log{L}$, fcLSH is sufficiently faster than the basic $r$-covering scheme.


\subsection{Theoretical Analysis}\label{sec:analysis}

Now we sketch a theoretical analysis of the correctness of fcLSH. We first show that the general and the specific construction are efficient $r$-covering LSH schemes with the two properties stated in Theorem~\ref{thm:collision}. 
Note that the first property guarantees that fcLSH always eliminates false negatives and reports all near neighbors for the $r$-NN problem. The second property says that fcLSH has the same pruning power as the classic LSH scheme~\cite{Indyk_STOC98}. 
Then we argue that Algorithm~\ref{alg:FHT} computes exactly the same results as using the universal hash function $p()$ defined in Equation~(\ref{eq:universal}) to convert a binary hash value into an integer.\full{ As a consequence, we prove that fcLSH with the pre-processing steps in Algorithm~\ref{alg:preprocessing} (replicating or partitioning) is also an efficient $r$-covering LSH scheme for near neighbor search. } 

\full{
The following lemmas show that both the general and the specific construction gives good $r$-covering LSH families.
\begin{lemma}\label{lem:random}
Given a random mapping $m: [d] \rightarrow [2^{r+1}]$, an $r$-covering LSH family $\mathcal{A}$ can be constructed by selecting $d$ columns $m(1), \ldots, m(d)$ from the Hadamard codes ${\bf C}_{2^{r+1}-1, 2^{r+1}}$. This family satisfies properties 1 and 2 in Theorem~\ref{thm:collision}.
\end{lemma}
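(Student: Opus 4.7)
The plan is to reduce Lemma~\ref{lem:random} to Theorems~\ref{thm:covering} and~\ref{thm:collision} by showing that the Hadamard-code construction in Equation~(\ref{eq:hadamardcode}) is literally identical, up to relabeling the image set of $m$, to the CoveringLSH construction in Equation~(\ref{eq:naive}) whose covering and collision properties have already been established. Once this equivalence is in place, properties 1 and 2 transfer for free.

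First I would fix the natural bijection $\phi : [2^{r+1}] \to \{0,1\}^{r+1}$ sending each integer $j$ to its binary representation $a(j)$. Under $\phi$, a uniformly random mapping $m : [d] \to [2^{r+1}]$ corresponds exactly in distribution to a uniformly random mapping $\tilde{m} : [d] \to \{0,1\}^{r+1}$ defined by $\tilde{m}(i) = a(m(i))$. This is an elementary observation about uniform distributions on sets of equal cardinality, but it is the step that licenses transferring the probabilistic statements of Theorem~\ref{thm:collision} across the reformulation.

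Next I would unpack the Hadamard-code coordinate formula. By the definition of $\text{Had}(\bm{v})$ in Equation~(\ref{eq:code}), the $j$-th coordinate of $\text{Had}(\bm{v})$ is $\langle a(j), \bm{v}\rangle$, so column $j$ of the Hadamard matrix, read across non-zero rows $\bm{v}$, is exactly the vector $(\langle a(j), \bm{v}\rangle)_{\bm{v}\ne \bm{0}}$. Therefore the $v$-th hash-function vector produced by Equation~(\ref{eq:hadamardcode}) has $i$-th entry $\text{Had}(\bm{v})_{m(i)} = \langle a(m(i)), \bm{v}\rangle = \langle \tilde{m}(i), \bm{v}\rangle$, which coincides coordinate-wise with the CoveringLSH vector $\bm{g}_v$ built from the mapping $\tilde{m}$ in Equation~(\ref{eq:naive}). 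Since both constructions produce the hash value $g_v(\bm{x}) = \bm{g}_v \wedge \bm{x}$, the two random families $\mathcal{A}$ are identical as distributions over families.

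With this identification established, property~1 follows from Theorem~\ref{thm:covering} applied to $\tilde{m}$: that theorem guarantees the $r$-covering property for \emph{every} realization of $\tilde{m}$, so a collision exists with probability $1$. Property~2 follows verbatim from the bound in Theorem~\ref{thm:collision} for the random mapping $\tilde{m}$. The only place where care is required is the bookkeeping that the Hadamard matrix row $\bm{v} = \bm{0}$ is discarded in ${\bf C}_{2^{r+1}-1,\,2^{r+1}}$, matching the fact that CoveringLSH indexes its $2^{r+1}-1$ hash functions by non-zero $\bm{v}\in\{0,1\}^{r+1}\setminus\{\bm{0}\}$; I expect this indexing alignment to be the only genuinely non-trivial step in what is otherwise a mechanical reduction.
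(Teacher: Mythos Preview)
Your proposal is correct and takes essentially the same approach as the paper: both arguments reduce the lemma to Theorems~\ref{thm:covering} and~\ref{thm:collision} by observing that sampling columns of the Hadamard code matrix is literally the same as the original CoveringLSH construction in Equation~(\ref{eq:naive}) under the binary-representation bijection $[2^{r+1}] \leftrightarrow \{0,1\}^{r+1}$. The paper states this identification in a single sentence, whereas you spell out the coordinate-level verification and the distributional equivalence of the two random mappings; your version is more detailed but not substantively different.
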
 
\begin{proof}
The proof is straightforward since the procedure of randomly sampling $d$ columns from the Hadamard codes ${\bf C}_{2^{r+1}-1, 2^{r+1}}$ is identical to the basic construction of $r$-covering LSH in Equation~(\ref{eq:naive}). 
\end{proof}


It is worth noting that we can use a random mapping $m: [d] \rightarrow [2^{r+1}] \backslash \{1\}$ to ignore the first column $\bm{0}$ of the Hadamard codes. This mapping produces an $r$-covering LSH family $\mathcal{A'}$ with a sharper bound for the 2nd property of the $r$-covering scheme. Following up to the proof of~\cite[Theorem~3.1]{Pagh_SODA16}, we have:
\begin{displaymath}
\begin{aligned}
\E{\left| \left\{ g \in \mathcal{A'} \, | \, g(\bm{x}) = g(\bm{y}) \right\}\right|} &< 2^{r+1}(\frac{1}{2} - \frac{1}{2^{r+1}})^{d(\bm{x}, \bm{y})} \\
 &<2^{r+1 - d(\bm{x}, \bm{y})}.
\end{aligned}
\end{displaymath}
\begin{lemma}\label{lem:deterministic}
A random column-based permutation of Hadamard codes ${\bf C}_{2^{r+1}-1, 2^{r+1}}$ is an $r$-covering LSH family $\mathcal{A}$ for data sets with dimensionality $d \leq 2^{r+1}$. This family satisfies properties 1 and 2 in Theorem~\ref{thm:collision}.
\end{lemma}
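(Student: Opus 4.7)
The plan is to mirror the proof of Lemma~\ref{lem:random} (which in turn mirrors Theorem~\ref{thm:collision}), replacing the random mapping by the random bijection induced by the column permutation. First I would reduce to the case $d = 2^{r+1}$ via the 0-padding trick: padding $\bm{x}$ and $\bm{y}$ with zeros up to length $2^{r+1}$ preserves their Hamming distance $k = d(\bm{x}, \bm{y})$ and cannot affect collisions, since on the padded coordinates both vectors carry a $0$ and the image under every $g_v$ is also $0$. A uniformly random column permutation of ${\bf C}_{2^{r+1}-1,\,2^{r+1}}$ is then equivalent to a uniformly random bijection $m : [2^{r+1}] \to \{0,1\}^{r+1}$ that labels coordinate $i$ with the vector $m(i)$, exactly as in Equation~(\ref{eq:naive}).

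For Property~1 I would argue deterministically. Let $S \subseteq [2^{r+1}]$ be the set of $k \leq r$ coordinates on which $\bm{x}$ and $\bm{y}$ disagree. Expanding the hash definition, $g_v(\bm{x}) = g_v(\bm{y})$ iff $\langle m(i), \bm{v}\rangle = 0$ over $\mathrm{GF}(2)$ for every $i \in S$. Since $|S| \leq r$, the vectors $\{m(i) : i \in S\}$ span a subspace of dimension at most $r$ inside $\{0,1\}^{r+1}$, so its orthogonal complement has dimension at least $1$ and contains a nonzero $\bm{v}$. The corresponding $g_v \in \mathcal{A}$ witnesses the required collision, and the argument works for every permutation, giving $\Pr{\exists g \in \mathcal{A} : g(\bm{x}) = g(\bm{y})} = 1$.

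For Property~2 I would fix a nonzero $\bm{v} \in \{0,1\}^{r+1}$ and compute the probability that $m$ sends all $k$ elements of $S$ into the $2^r$-element hyperplane $\bm{v}^{\perp}$. Because $m$ restricted to $S$ is a uniformly random injection, the $k$ images are distinct and
\[
\Pr{g_v(\bm{x}) = g_v(\bm{y})} \;=\; \frac{\binom{2^r}{k}}{\binom{2^{r+1}}{k}} \;=\; \prod_{j=0}^{k-1} \frac{2^r - j}{2^{r+1} - j} \;\leq\; 2^{-k},
\]
since each factor is at most $1/2$. Linearity of expectation over the $2^{r+1} - 1$ nonzero $\bm{v}$ then yields $\E{|\{g \in \mathcal{A} : g(\bm{x}) = g(\bm{y})\}|} \leq (2^{r+1} - 1)\, 2^{-k} < 2^{r+1 - d(\bm{x}, \bm{y})}$, matching the bound in Theorem~\ref{thm:collision}.

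The main obstacle I foresee is precisely this sampling-without-replacement step: the proof of Theorem~\ref{thm:collision} leans on independence of the images $m(i)$, which fails for a bijection. The workaround is the elementary inequality $\binom{2^r}{k}/\binom{2^{r+1}}{k} \leq 2^{-k}$ displayed above, which captures the intuition that drawing from a universe that is exactly half ``good'' without replacement only \emph{suppresses} the chance of drawing all good elements relative to the independent case---in fact strictly so when $k \geq 2$, explaining the empirical advantage of the permutation-based construction over the random-mapping one. Every other component transfers directly from Lemma~\ref{lem:random}.
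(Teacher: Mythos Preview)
Your proposal is correct and follows essentially the same route as the paper: Property~1 via the linear-algebra/covering argument (the paper invokes Theorem~\ref{thm:covering}, whose proof is exactly your orthogonal-complement step), and Property~2 by bounding the per-$\bm{v}$ collision probability by $2^{-k}$ and summing. Your treatment of Property~2 is in fact more careful than the paper's, which simply asserts ``it can be shown that the probability that $\bm{g}\wedge\bm{z}=\bm{0}$ is bounded by $2^{-\|\bm{z}\|_1}$'' without addressing the dependence; your explicit sampling-without-replacement bound $\binom{2^r}{k}/\binom{2^{r+1}}{k}\le 2^{-k}$ is precisely the missing justification.
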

\begin{proof}
This follows from the proof of~\cite[Theorem~3.1]{Pagh_SODA16}. However, we sketch the proof here for completeness. Given two near binary vectors $\bm{x}, \bm{y} \in \{0, 1\}^{d}$ with $d(\bm{x}, \bm{y}) \leq r$, let $\bm{z} = \bm{x}  \oplus \bm{y}$ satisfy $\left\|\bm{z}\right\|_1 \leq r$. It is clear that a collision between $\bm{x}$ and $\bm{y}$ under a hash function vector $\bm{g}$ corresponds to $ \bm{g} \wedge \bm{z} = \bm{0}$. In other words, any $r$ bit positions with 1s are mapped to zero under the hash function vector $\bm{g}$.

Since $d \leq 2^{r+1}$, according to Theorem~\ref{thm:covering}, for every random permutation $m: [2^{r+1}] \rightarrow [2^{r+1}]$, the construction shown in Equation~(\ref{eq:naive}) leads to an $r$-covering LSH scheme. This means that ${\bf C}_{2^{r+1}-1, 2^{r+1}}$ is $r$-covering.

In order to satisfy the second property, we need a random column-permutation of ${\bf C}_{2^{r+1}-1, 2^{r+1}}$. This trick will prevent the worst-case data sets where far pairs always collide due to $d < 2^{r+1}$. The random permutation of columns of ${\bf C}_{2^{r+1}-1, 2^{r+1}}$ makes the bit value $g_i \wedge z_i$ random, so the probability that $z_i = 1$ and $g_i \wedge z_i = 0$ is~$1/2$. It can be shown that the probability that $\bm{g} \wedge \bm{z} = \bm{0}$ is bounded by $2^{-\left\|\bm{z}\right\|_1}$. By linearity of expectation, summing over $2^{r+1}-1$ rows of ${\bf C}_{2^{r+1}-1, 2^{r+1}}$, the second property follows.
\end{proof}
}

\short{
\begin{lemma}\label{lem:rcovering}
Both general and specific constructions described above give $r$-covering LSH families. These families satisfy properties 1 and 2 in Theorem~\ref{thm:collision}.
\end{lemma}
\begin{proof}
Using the proofs of~\cite[Lemma~3.2]{Pagh_SODA16} and~\cite[Theorem~3.1]{Pagh_SODA16}, we can verify the claim. 
\end{proof}

}

It is worth noting that in a typical setting where $d = 2^{r+1}$, the size of ${\bf C}_{2^{r+1}-1, 2^{r+1}}$ is close to the smallest possible for an $r$-covering LSH family. Observe that we have $\binom{2^{r+1}}{r}$ possible sets of $r$ differences, and each row of ${\bf C}_{2^{r+1}-1, 2^{r+1}}$ can cover at most $\binom{2^{r}}{r}$ such sets. This means that the number of hash functions needed is at least $\binom{2^{r+1}}{r} / \binom{2^{r}}{r} > 2^r$, which is within a factor of 2 from the upper bound. This implies that the specific construction often gives better results than the general construction.

Next, we argue that Algorithm~\ref{alg:FHT} produces $r$-covering LSH families. Before presenting lemmas, let us describe the main technical insight used in Algorithm~\ref{alg:FHT}. Consider the ideal case where $d = 2^{r+1}$, and recall that the Hadamard code matrix ${\bf C}$ can be generated by the Hadamard matrix ${\bf H}$ with the same size by mapping $1 \mapsto 0$ and $-1 \mapsto 1$. If we let $\bf{1}$ denote the matrix with~1 in every entry, we have ${\bf C} = \left( \bf{1} - {\bf H} \right) / 2$. Given any binary vector $\bm{q}$, the hash value of $\bm{q}$ under the hash function vector ${\bf C}_v$ (the $v$th row of {\bf C}) is $g_v(\bm{q}) = {\bf C}_v \wedge \bm{q}$. Using the universal hash function $p()$ in Equation~(\ref{eq:universal}), we need a prime $P$ and a random seed vector $\bm{b}$ for computing $\bm{b} \cdot g_v(\bm{q}) \mod P$. This means that we need to compute the matrix-vector multiplication ${\bf C}\bm{\widetilde{q}}$, where $\bm{\widetilde{q}} = \bm{q} * \bm{b}$ is a component-wise product, as follows:
\short{\vspace{-2mm}}
\begin{eqnarray}  
\label{eq:FHT} 
{\bf C}\bm{\widetilde{q}} = \frac{1}{2}(\bm{1} - {\bf H})\bm{\bm{\widetilde{q}}} = \frac{1}{2}\left\|\bm{\bm{\widetilde{q}}}\right\|_1  {\bf 1} - \frac{1}{2}{\bf H}\bm{\bm{\widetilde{q}}}.
\end{eqnarray}

\begin{lemma}\label{lem:exactFHT}
Given a prime $P$ and any random seed vector $\bm{b} \in [P]^{d}$, Algorithm~\ref{alg:FHT} computes the same hash values as using $p()$ in Equation~(\ref{eq:universal}) on the $r$-covering LSH scheme introduced in~\cite{Pagh_SODA16}.
\end{lemma}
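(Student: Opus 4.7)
}
The plan is to unfold both sides of the claimed equality and check they coincide entry by entry. Recall that applying $p()$ from Equation~(\ref{eq:universal}) to the hash value $g_v(\bm{q}) = \bm{g}_v \wedge \bm{q}$ gives
\[
p(g_v(\bm{q})) \;=\; \sum_{i=1}^{d} b_i \,g_{v,i}\, q_i \pmod{P},
\]
where $g_{v,i}$ is the $i$th coordinate of $\bm{g}_v$ and we used $(g_v(\bm{q}))_i = g_{v,i}q_i$ since both factors are bits. Defining $\bm{\widetilde{q}} = \bm{q} * \bm{b}$ as in line~2 of Algorithm~\ref{alg:FHT} rewrites this as $\sum_i g_{v,i}\widetilde{q}_i \bmod P$, so I only need to show that line~10 of the algorithm produces exactly this value in each of its $L$ output entries.

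Next I would handle the two constructions in parallel. In the general case ($d>2^{r+1}$), Equation~(\ref{eq:hadamardcode}) gives $g_{v,i}={\bf C}_{v,m(i)}$, and the sketch $t_j=\sum_{i:m(i)=j}\widetilde{q}_i$ from line~5 satisfies
\[
\sum_{i=1}^{d} g_{v,i}\widetilde{q}_i \;=\; \sum_{j=1}^{2^{r+1}} {\bf C}_{v,j}\!\!\sum_{i:\,m(i)=j}\!\!\widetilde{q}_i \;=\; ({\bf C}\bm{t})_v,
\]
by regrouping the sum according to the fibers of $m$. In the specific case ($d\le 2^{r+1}$), line~8 takes $\bm{t}$ to be a permutation of $\bm{\widetilde{q}}$ after $0$-padding to length $2^{r+1}$. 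Because the same permutation is equivalently applied to the columns of ${\bf C}$ in Lemma~\ref{lem:deterministic}, we again get $\sum_i g_{v,i}\widetilde{q}_i = ({\bf C}\bm{t})_v$. In either case, $\|\bm{t}\|_1=\|\bm{\widetilde{q}}\|_1$ since all entries of $\bm{\widetilde{q}}$ are nonnegative integers (the general case re-sums them, the specific case merely permutes and pads with zeros).

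Then I would finish by invoking Equation~(\ref{eq:FHT}): since ${\bf C}=\tfrac12({\bf 1}-{\bf H})$,
\[
{\bf C}\bm{t} \;=\; \tfrac12\|\bm{t}\|_1{\bf 1} - \tfrac12{\bf H}\bm{t} \;=\; \tfrac12\bigl(\|\bm{\widetilde{q}}\|_1{\bf 1}-\texttt{FHT}(\bm{t})\bigr),
\]
which is precisely the vector $\bm{h}$ computed at line~10 before the mod-$P$ reduction. Taking the result mod $P$ and discarding the first entry (line~11), which corresponds to the trivially zero row of ${\bf C}$ that is also excluded in $\mathcal{A}$, yields the same $L$ integer hash values as $p(g_v(\bm{q}))$ for $v=1,\dots,2^{r+1}-1$.

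The main obstacle I anticipate is purely bookkeeping: justifying that summing colliding entries in the general case (where $m$ is not injective) is legitimate and yields the same matrix-vector product, and that the permutation-plus-padding in the specific case is equivalent to the column-permuted Hadamard family of Lemma~\ref{lem:deterministic}. Both reduce to linearity of the matrix-vector product together with the observation that padding $\bm{\widetilde{q}}$ with zeros cannot change any inner product against a row of ${\bf C}$; once these are stated carefully, the rest is the straightforward algebraic identity above.
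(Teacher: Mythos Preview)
Your proposal is correct and follows essentially the same approach as the paper: both unfold $p(g_v(\bm{q}))$ as $\sum_i g_{v,i}\widetilde{q}_i$, regroup the sum through the mapping $m$ to obtain $({\bf C}\bm{t})_v$, and then apply the identity ${\bf C}=\tfrac12({\bf 1}-{\bf H})$ to reach the \texttt{FHT} expression in line~10. The only cosmetic difference is that the paper reduces the specific (permutation) construction to the general (mapping) construction up front and phrases the regrouping via one-hot vectors $\bm{\xi}_i$ with $\sum_i\bm{\xi}_i=\bm{t}$, whereas you treat the two cases in parallel and regroup by fibers of $m$ directly; these are the same computation.
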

\begin{proof}
Since the random permutation used in the specific construction is a special case of the random mapping used in the general construction, we need only prove the claim for the general construction. 

Given any binary vector $\bm{q}$, we let $\bm{\widetilde{q}} = \bm{q} * \bm{b}$. It is clear that the contribution of $b_i q_i$ to the integer hash value is determined by the random mapping value $m(i)$. We form a vector $\bm{\xi}_i \in {\bf N}^{2^{r+1}}$ corresponding to the contribution of $b_i q_i$ whose entry at position $m(i)$ is $q_i b_i$ and the others are zero. The integer-value hash values of $\bm{q}$ is then computed as follows:
\begin{displaymath}
\begin{aligned}
{\bf C} \left( \sum_{i = 1}^{d}{\bm{\xi}_i} \right) &= \frac{1}{2} \sum_{i=1}^{d}{ \left( b_i q_i - {\bf H}\bm{\xi}_i \right) } = \frac{1}{2} \left( \sum_{i=1}^{d}{b_i q_i} - \sum_{i=1}^{d}{{\bf H}\bm{\xi}_i} \right)\\
&= \frac{1}{2} \sum_{i=1}^{d}{b_i q_i} - \frac{1}{2} {\bf H} \sum_{i=1}^{d}{\bm{\xi}_i} =  \frac{1}{2}\left\|\bm{\widetilde{q}}\right\|_1  {\bf 1} - \frac{1}{2}{\bf H}\bm{t},
\end{aligned}
\end{displaymath}
 where the vector ${\bm t}$ is computed by $t_j =\sum_{i:m(i)=j}{b_i q_i}$. Applying \texttt{FHT()} on the second term proves the claim.
\end{proof}

\begin{corollary}\label{lem:FHT_general}
Given a sufficiently large prime $P$, a construction of fcLSH provided by Algorithm~\ref{alg:FHT} is an $r$-covering LSH scheme with properties~1 and~2 of Theorem~\ref{thm:collision}.
\end{corollary}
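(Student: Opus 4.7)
The plan is to derive Corollary~\ref{lem:FHT_general} by combining two ingredients already in place: the fact that the basic $r$-covering construction of~\cite{Pagh_SODA16} satisfies properties 1 and 2 of Theorem~\ref{thm:collision} (established in Lemma~\ref{lem:rcovering} / Lemmas~\ref{lem:random}--\ref{lem:deterministic}), and Lemma~\ref{lem:exactFHT}, which tells us that the integer hash values produced by Algorithm~\ref{alg:FHT} are exactly those obtained by composing the universal hash function $p(\cdot)$ of Equation~(\ref{eq:universal}) with the binary $r$-covering hash functions $g_v$. Thus the proof reduces to checking that composing with $p(\cdot)$ preserves both properties, up to a $1/P$ loss that is negligible when $P$ is sufficiently large.

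First I would handle property 1. Suppose $d(\bm{x},\bm{y}) \leq r$. By property 1 of the underlying $r$-covering family there exists some $v$ with $g_v(\bm{x}) = g_v(\bm{y})$ as $d$-bit binary vectors. Because $p(\cdot)$ is a deterministic function of its binary argument, $p(g_v(\bm{x})) = p(g_v(\bm{y}))$, and by Lemma~\ref{lem:exactFHT} this is exactly the $v$th integer hash value output by Algorithm~\ref{alg:FHT}. Hence near pairs are guaranteed to collide in at least one of the $L = 2^{r+1}-1$ hash tables, so property 1 is preserved without any condition on $P$.

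Next I would handle property 2. For any pair $\bm{x},\bm{y}$, decompose the event that their $v$th integer hashes coincide into (a) the ``real'' collision event $g_v(\bm{x}) = g_v(\bm{y})$ and (b) the ``spurious'' collision event where $g_v(\bm{x}) \neq g_v(\bm{y})$ but $p$ maps them to the same residue mod $P$. By property 2 of the underlying scheme, the expected count of type (a) events over the $L$ hash functions is strictly less than $2^{r+1-d(\bm{x},\bm{y})}$. By universality of the family defined in Equation~(\ref{eq:universal}), the probability of a type (b) event for a fixed $v$ is at most $1/P$, so summing over $v$ contributes at most $L/P$ in expectation. Choosing $P$ sufficiently large (e.g.\ $P \geq L \cdot 2^{r+1}$, which in the regime $cr = \log n$ means $P = \mathrm{poly}(n)$ suffices) makes this spurious contribution at most $2^{-r-1}$, so the combined expectation is still bounded by $2^{r+1-d(\bm{x},\bm{y})}$ up to an additive lower-order term, preserving property 2 in the form used throughout the analysis.

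The only real subtlety — and the place I would be most careful — is the quantitative statement of ``sufficiently large $P$'': one should verify that the $P$ used in Algorithm~\ref{alg:FHT} is chosen large enough that the $L/P$ slack does not degrade the pruning power argument that underlies the $\BO{n^{1/c}}$-style bounds discussed right after Theorem~\ref{thm:collision}. Since the number of candidate far points per query is controlled by summing $2^{r+1-d(\bm{x},\bm{y})}$ over $\bm{x}$, and the spurious term contributes at most $nL/P$ total, taking $P = \Omega(n^2)$ (as already suggested by the footnote on universal hashing) makes this contribution $o(1)$ and completes the proof.
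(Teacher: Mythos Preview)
Your proposal is correct and matches the paper's intended reasoning. The paper actually states Corollary~\ref{lem:FHT_general} without an explicit proof, treating it as immediate from Lemma~\ref{lem:exactFHT} together with Lemmas~\ref{lem:random} and~\ref{lem:deterministic}; your write-up simply spells out the details the paper leaves implicit, including the decomposition of property~2 into real versus spurious collisions and the quantification of ``sufficiently large $P$'' via the $L/P$ slack.
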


\full{
Now we consider fcLSH with the pre-processing step in Algorithm~\ref{alg:preprocessing}. Due to the replication and partitioning step to satisfy $cr \approx \log{n}$, fcLSH does not have as strong a guarantee as the 2nd property in Theorem~\ref{thm:collision}. However, according to~\cite[Theorem~4.1]{Pagh_SODA16}, we derive the following extension of Theorem~\ref{thm:collision} for fcLSH.

\begin{lemma}
\label{lm:extension}
For any two binary vectors $\bm{x}, \bm{y} \in \{0, 1 \}^d$ and a random mapping $m:[d] \rightarrow \{0, 1\}^{r+1}$, an LSH family $\mathcal{A}$ constructed by fcLSH has following properties:
\begin{enumerate}[leftmargin=4ex]
	\item If $d(\bm{x}, \bm{y}) \leq r$ then $\Pr {\exists g \in \mathcal{A} : g(\bm{x}) = g(\bm{y})} = 1$.
	\item For the case requiring replication factor $t = \left\lfloor \log{(n)}/cr \right\rfloor$, \\
	$\E{\left| \left\{ g \in \mathcal{A} \, | \, g(\bm{x}) = g(\bm{y}) \right\}\right|} < 2n^{1/c}2^{-td(\bm{x}, \bm{y})}$.
	
	For the case requiring $t = \left\lceil \log{(n)}/cr \right\rceil$ partitions, \\
	$\E{\left| \left\{ g \in \mathcal{A} \, | \, g(\bm{x}) = g(\bm{y}) \right\}\right|} < 2n^{1/c}t\left( 1 - \frac{1}{2t} \right)^{d(\bm{x}, \bm{y})}$.
\end{enumerate}
\end{lemma}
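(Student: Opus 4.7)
The plan is to reduce both properties of Lemma~\ref{lm:extension} to Theorem~\ref{thm:collision} by viewing the preprocessing step as producing either one long $r'$-covering family (replication, with $r' = tr$) or $t$ independent $r'$-covering families on shorter vectors (partitioning, with $r' = \lfloor r/t \rfloor$). In both regimes, $cr' \leq \log n$, which gives the uniform bound $2^{r'+1} \leq 2n^{1/c}$ that is central to converting the raw $2^{r'+1}$ factors of Theorem~\ref{thm:collision}(2) into the $2n^{1/c}$ factors appearing in the lemma.

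Property~1 is essentially combinatorial. In the replication case, the pair $\bm{x}^{(t)}, \bm{y}^{(t)}$ has Hamming distance exactly $t \cdot d(\bm{x}, \bm{y}) \leq tr = r'$, so Theorem~\ref{thm:collision}(1) applied to the $r'$-covering family on the replicated vectors yields a collision. In the partitioning case, the $d(\bm{x}, \bm{y}) \leq r$ differences are distributed among $t$ partitions, so by the pigeonhole principle at least one partition contains at most $\lfloor r/t \rfloor = r'$ differences, and Theorem~\ref{thm:collision}(1) applied to the $r'$-covering family of that partition guarantees the needed collision.

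Property~2 in the replication case is a direct application of Theorem~\ref{thm:collision}(2): the expected number of colliding hash functions is at most $2^{r'+1 - t \cdot d(\bm{x}, \bm{y})} = 2^{tr+1} \cdot 2^{-t d(\bm{x}, \bm{y})} \leq 2n^{1/c} \cdot 2^{-t d(\bm{x}, \bm{y})}$. Property~2 in the partitioning case requires averaging over the random permutation: letting $D_i$ denote the number of differences falling into partition $i$, linearity of expectation combined with Theorem~\ref{thm:collision}(2) applied to each partition gives
\begin{equation*}
E[\text{total collisions}] \;\leq\; \sum_{i=1}^{t} 2^{r'+1} \cdot E\bigl[2^{-D_i}\bigr].
\end{equation*}
Each of the $d(\bm{x}, \bm{y})$ differences lands in partition $i$ with marginal probability $1/t$, so for each individual difference indicator $X_k$ one obtains $E[2^{-X_k}] = 1 - \tfrac{1}{2t}$. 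Treating the placements of distinct differences as independent yields $E[2^{-D_i}] \leq (1 - \tfrac{1}{2t})^{d(\bm{x}, \bm{y})}$, and summing over $t$ partitions with $2^{r'+1} \leq 2n^{1/c}$ produces the claimed bound $2 n^{1/c} t (1 - \tfrac{1}{2t})^{d(\bm{x}, \bm{y})}$.

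The main obstacle is justifying the independence step in the partitioning analysis: a uniform random permutation of $[d]$ into equal-sized blocks actually produces negatively correlated rather than independent bucket assignments. The cleanest workaround is either to replace the permutation with a truly random hash $[d] \to [t]$, in which case independence holds exactly, or to invoke a standard negative-association inequality to upper bound the expectation of the product by the product of expectations. Either route preserves the factor $(1 - \tfrac{1}{2t})^{d(\bm{x}, \bm{y})}$, so the inequality stated in Lemma~\ref{lm:extension} remains valid.
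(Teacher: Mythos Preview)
Your argument is correct and is essentially the intended one. The paper itself does not prove Lemma~\ref{lm:extension}; it simply states that the lemma is derived from \cite[Theorem~4.1]{Pagh_SODA16}, so your derivation \emph{is} the missing proof rather than an alternative to one. Your reduction of both the replication and partitioning cases to Theorem~\ref{thm:collision}, together with the bound $2^{r'+1}\le 2n^{1/c}$ obtained from $cr'\le\log n$, is exactly the right structure.

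Two small remarks. First, in the partitioning case the paper's Algorithm~\ref{alg:preprocessing} uses $t=\lceil cr/\log n\rceil$, which is what you implicitly assume when bounding $r'=\lfloor r/t\rfloor\le(\log n)/c$; the formula $t=\lceil\log(n)/cr\rceil$ appearing in the lemma statement is a typo. Second, your treatment of the dependence issue is fine: the indicators ``difference $k$ lands in block $i$'' under a uniformly random permutation are negatively associated (Joag-Dev and Proschan), and since $x\mapsto 2^{-x}$ is non-increasing one obtains $\E{\prod_k 2^{-X_k}}\le\prod_k\E{2^{-X_k}}=(1-\tfrac{1}{2t})^{d(\bm{x},\bm{y})}$, which is precisely what you need. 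The alternative of replacing the permutation by a random map $[d]\to[t]$ would also work but is not what the paper actually specifies, so the negative-association route is the appropriate justification here.
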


From Lemma~\ref{lm:extension}, the performance of fcLSH with a partitioning step is slightly worse than classic LSH. This observation matches the theoretical analysis in~\cite{Pagh_SODA16}, which states that the $r$-covering scheme differs from the classic LSH scheme by at most a factor $\ln{(4)} < 1.4$ in the exponent for the general values of~$cr$.
}

\full{
\subsection{Discussion}\label{sec:discussion}

It is clear that for the problem of reporting all near neighbors, any algorithm may return many (or even all) points if a large fraction of the data set is close to the query point. This means that there is no sublinear guarantee on the running time of such algorithms. However, there are many natural data sets with the property that the distance gap between near points and far points is large. For these data sets, LSH-based approaches with their efficient pruning mechanism enable us to quickly report all near neighbors given a query point. We chose the Webspam data set\footnote{http://www.csie.ntu.edu.tw/$\sim$cjlin/libsvmtools/datasets} and applied the standard cosine similarity LSH~\cite{Charikar_STOC02} to each document to get data sets of 64-bit and 256-bit fingerprint vectors, respectively. 
\begin{figure} [t]
\centering
\includegraphics[width=1.0\columnwidth]{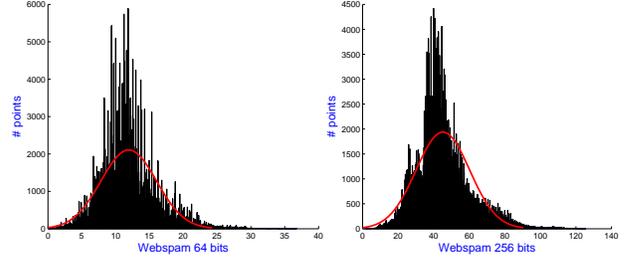}
\caption{The distance histogram between~50 random queries and a sampled set of data points of Webspam~64 bits (left) and Webspam~256 bits (right).}
\label{fig:Distribution}
\end{figure}

Figure~\ref{fig:Distribution} shows the distance histogram between~50 random queries and a sampled set of points of Webspam 64 bits and Webspam 256 bits. Given a vector query $\bm{q}$, we wish to search all vectors within distance $r = 10$ from $\bm{q}$. It is clear that any approach to answer this problem on Webspam 64 bits needs to return almost half number of points. LSH-based approaches with approximation ratio $c = 1.5$ are only able to filter away less than 30\% number of points. This means that LSH-based approaches might be outperformed by simple linear search. However, on the 256-bit version, the filtering mechanism of LSH-based approaches works efficiently. With an approximation ratio $c = 3$, LSH-based approaches can filter away up to 70\% of points. This implies a possible speedup of~3 times compared to linear search. In this setting, fcLSH clearly outperforms linear search while preserving the exactness guarantee. 

In general, LSH-based approaches are able to efficiently solve the problem of reporting all near neighbors for data sets that have large distance gap between near points and far points. For data sets that do not have such large distance gap, linear search might be a better choice.
}

\section{Experiment}\label{sec:experiment}

We implemented fcLSH in C++ and conducted experiments on an Intel Xeon Processor E5-1650~v3 with~64GB of RAM. We compared the performance of hashing-based algorithms for reporting all near neighbors, including our fcLSH scheme, the basic $r$-covering LSH~\cite{Pagh_SODA16}, the classic LSH scheme~\cite{Indyk_STOC98}, and the multi-index hashing approach~\cite{Norouzi_PAMI14} on synthetic and real-world data sets. Each result is the average of~5 runs over a query set of an algorithm.

\subsection{Experiment Setup}\label{sec:expsetting}

We consider alternative hashing-based approaches with performance guarantees in Hamming space for comparison. The following algorithms are used.
%
\begin{itemize}[leftmargin=4ex]
	\item \textbf{fcLSH}: Our method with fast computation of hash function using \texttt{FHT()}.	
	\item \textbf{bcLSH}: The basic covering construction~\cite{Pagh_SODA16} based on random samples from Hadamard codes.
	\item \textbf{LSH}: The classic LSH~\cite{Indyk_STOC98} using bit sampling approach.
	\item \textbf{MIH}: The recent multi-index hashing approach~\cite{Norouzi_PAMI14} running in sub-linear time for exact $r$-NN over uniformly distributed 
	data sets. 
\end{itemize}

Note that MIH is an alternative to exhaustive search in Hamming balls over data sub-dimensions. Based on the pigeonhole principle, MIH partitions data dimensions to reduce the radius, which is similar to our approach. However, the sub-linear guarantees of MIH is based on the strong assumption of uniform distribution of data points which is not true in many natural data sets\full{~\cite{Liu_ICDE11, Jegou_PAMI11, Manku_WWW07}}\short{~\cite{Jegou_PAMI11, Manku_WWW07}}.

\begin{table}[t]
\centering
\caption{Hash function computation time}
\label{tb:complexity} 
\begin{tabular}{|c|c|c|c|c|} \hline
Method & fcLSH & bcLSH & LSH & MIH  \\ \hline
Time & $\BO{d + L\log{L}}$ & $\BO{dL}$ &  $\BO{kL}$ &  $\BO{d}$ \\ \hline
\end{tabular}
\end{table}
\begin{table}[t]
\centering
\caption{Data set properties}
\label{tb:datasets} 
\begin{tabular}{|c|c|c|c|} \hline
Data sets & $n$ & $d$ & Binarization \\ \hline
ANN\_SIFT1M & 1M & 128 & LSH \\ \hline
Webspam & 0.35M & 254 & LSH \\ \hline
Enron &  $\sim$ 40K & $\sim$ 28K & Word freq. > 10 \\ \hline
MovieLens & $\sim$ 0.23M & $\sim$ 140K & Rating > 2 \\ \hline
\end{tabular}
\end{table}

\textbf{Parameter settings.} It is obvious that each hashing-based method achieves the best performance given the proper choices of parameters. Since such proper choices primarily depend on the distance distribution between queries and data points, we use suggested settings as below.
\full{
\begin{itemize}[leftmargin=4ex]
	\item For the general $r$-covering LSH schemes, including fcLSH and bcLSH~\cite{Pagh_SODA16}, we only need the partition trick when $r$ is large (say, $ r \geq 10$) since in that case we might not have enough space for $L = 2^{r+1}-1$ hash tables.	
	\item For classic LSH, we simply set the number of hash tables $L = 2^{r+1}-1$ for the sake of comparison. The number of bit samples is set as  $k = \left\lceil \log{(1 - \delta^{1/L})} / \log{(1-r/d)}\right\rceil$ where $\delta$ is the false negative ratio\footnote{http://www.mit.edu/$\sim$andoni/LSH/manual.pdf}.
	\item For MIH, the number of partitions is $\left\lceil d / \log_2{n}\right\rceil$ as suggested in~\cite{Norouzi_PAMI14}.
\end{itemize}
}

\short{
\vspace{-2mm}
\begin{itemize}[leftmargin=4ex]
	\item For the CoveringLSH schemes, including fcLSH and bcLSH~\cite{Pagh_SODA16}, we only need the partition trick when $r$ is large (say, $r \geq 10$) since we might not have enough space for $L = 2^{r+1}-1$ hash tables.	
	\vspace{-1mm}
	\item For classic LSH, we simply set the number of hash tables $L = 2^{r+1}-1$ for the sake of comparison. The number of bit samples is set as  $k = \left\lceil \log{(1 - \delta^{1/L})} / \log{(1-r/d)}\right\rceil$ where $\delta$ is the false negative ratio\footnote{http://www.mit.edu/$\sim$andoni/LSH/manual.pdf}.
	\vspace{-1mm}
	\item For MIH, the number of partitions is $\left\lceil d / \log_2{n}\right\rceil$ as suggested in~\cite{Norouzi_PAMI14}.
\end{itemize}
}
\short{\vspace{-2mm}}

\textbf{Cost measurement.} To report all near neighbors, we \full{need to} follow the Strategy~2. In general, for each query, any hashing-based approach needs to process the following operations:
\short{\vspace{-2mm}}
\begin{itemize}[leftmargin=4ex]
	\item \textbf{Step S1:} Compute hash functions to identify the bucket of the query on each of the $L$ hash tables.
	\short{\vspace{-1mm}}
	\item \textbf{Step S2:} Look up in each hash table the points in the bucket of the query, and merge them together for duplicate elimination to form a list of candidates.
	\short{\vspace{-1mm}}
	\item \textbf{Step S3:} Compute the actual distance between candidates and the query to report near neighbor points.
\end{itemize}
\short{\vspace{-2mm}}

We decompose the total search cost per query into~3 cost components of the three main steps above. The cost of~S1 is dependent on the dimensionality of data and the parameter settings for each algorithm which can be analyzed precisely (see Table~\ref{tb:complexity}), whereas the costs of~S2 and~S3 significantly depend on the data distribution and the distance distribution between query and data points, respectively. Since the data sets used in our experiment are both in low-dimensional and high-dimensional space, we focus on the cost of~S2 and~S3.

The cost of~S2, called $C_{lookup}$, is for merging and removing duplicates since very close points might collide many times in different hash tables. Typically, we use a bitmap string of $n$ bits to remove such duplicates~\cite{Norouzi_PAMI14, Sundaram_VLDB13}. Every time a candidate is found, we set the bit corresponding to that candidate. Thus this cost is proportional to the number of collisions \texttt{\#Collisions} over all hash tables.

The cost of~S3, called $C_{check}$, is proportional to the number of \textit{distinct} candidates \texttt{\#Candidates} returned from step~S2. Dependent on the dimensionality $d$, the size of candidates, and cache and disk access implementation, this cost may or may not dominate $C_{lookup}$. Hence, for the sake of comparison, we report separately these two main costs for each algorithm. 

\begin{figure} [t]
\centering
\includegraphics[width=1.0\columnwidth]{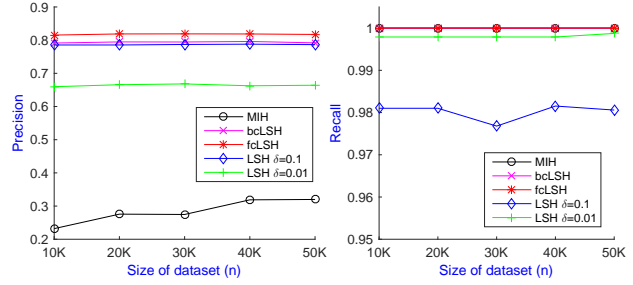}
\caption{Comparison of precision/recall rate between fcLSH and bcLSH without pre-processing, MIH, the classic LSH with $\delta = 0.1$ and $\delta = 0.01$ on synthetic data sets of $n = 10K - 50K$ and $r = 6$.}
\label{fig:SyntheticIdeal}
\end{figure}

\begin{figure*} [ht]
\centering
\includegraphics[width=1.0\textwidth]{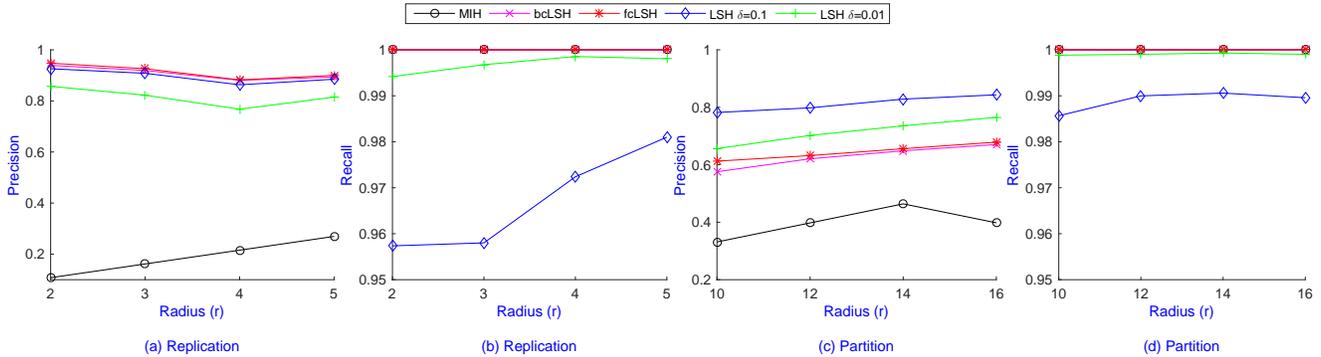}
\caption{Comparison of precision/recall rate between fcLSH and bcLSH with pre-processing, MIH, classic LSH with $\delta = 0.1$ and $\delta = 0.01$ on the synthetic data set of $n = 64K$ points.}
\label{fig:SyntheticPreprocessing}
\end{figure*}

\subsection{Data Sets}

We evaluated the performance of fcLSH using synthetic data sets and~4 real-world data sets from images, text, and recommendation systems. Properties of these data sets are summarized in Table~\ref{tb:datasets}\short{.}\full{, and presented in more detail below.} 
%
\short{\vspace{-2mm}}
\begin{itemize}[leftmargin=4ex]
	\item \textbf{Synthetic} contains uniformly distributed binary data sets of dimension~128. \full{Given the query point, we generated uniformly distributed binary vectors in Hamming balls of radii from~1 to~128. Since the MIH approach runs in sub-linear time for uniformly distributed binary vectors, we use these data sets to verify its performance. In addition, we also compare the performance of the basic $r$-covering scheme to fcLSH with the replication and partitioning trick.}
\short{\vspace{-1mm}}
	\item \textbf{ANN\_SIFT1M~\cite{Jegou_PAMI11}} contains~1 million 128-dimensional SIFT feature vectors of images. \full{We generate standard binary codes~\cite{Charikar_STOC02} to each image to get 64-bit and 128-bit fingerprints of vectors. The fingerprints have the property that if two original images are similar, then the Hamming distance between their fingerprints is small.}
\short{\vspace{-1mm}}
	\item \textbf{Webspam\footnote{http://www.csie.ntu.edu.tw/$\sim$cjlin/libsvmtools/datasets}} contains 350,000 web documents. \full{We apply the standard LSH~\cite{Charikar_STOC02} to each document to get 256-bit and 512-bit fingerprint vectors. The fingerprints have the property that if two original documents are near-duplicates, then the Hamming distance between their fingerprints is small.}
\short{\vspace{-1mm}}
	\item \textbf{Enron\footnote{http://archive.ics.uci.edu/ml/data sets/Bag+of+Words}} contains a collection of about~40,000 emails. \full{After tokenization and removal of stop words, the vocabulary of unique words was binarized by only keeping words that occurred more than ten times. We obtained a very high-dimensional binary text document with $d = 28,102$ unique words.}
\short{\vspace{-4mm}}
	\item \textbf{MovieLens\footnote{http://grouplens.org/data sets/movielens/}} contains ratings applied to 140,214 movies by 234,834 users. \full{Ratings are made on a 5-star scale, with half-star increments (0.5 stars - 5.0 stars). We binarized this data set by only considering ratings at least 2.5 to represent  `positive' and ratings smaller than~2.5 or no ratings for `negative'.  We obtained a very high-dimensional binary data set where each data point corresponds to a user, and we would like to find similar users given their movie ratings.}
\end{itemize}
\short{\vspace{-2mm}}

We randomly remove~50 points from the ANN\_SIFT1M and Webspam data sets, and~100 points from the Enron and MovieLens data sets to use them as query points in our performance study. We need more query points for the latter cases for the sake of comparison, since with small radius (up to~20), there are some query points that do not have any near neighbors. The ground truth for each query point is computed by a linear scan of the entire data sets.

\subsection{Synthetic Data Sets}

We carried out experiments to evaluate the accuracy and efficiency of our constructions with and without a pre-processing step (replicating/partitioning),  over synthetic data sets for the task of reporting all near neighbors. 
We used precision/recall rates to measure the performance of hashing-based methods, including fcLSH, bcLSH, MIH, and classic LSH with recall ratio of 90\% (i.e., $\delta = 0.1$) and 99\% (i.e., $\delta = 0.01$), for a wide range of query radii and data set sizes. We note that if we ignore the $C_{lookup}$ cost, the precision ratio corresponds to the speedup compared to linear search. 
%

%

Figure~\ref{fig:SyntheticIdeal} displays the precision/recall rate of algorithms for reporting points within distance $r = 6$ from a query. The number of hash tables for the LSH-based method is $L = 2^{r+1} - 1 = 127$ whereas that of MIH is at most~10. It is obvious that LSH-based approaches achieve almost~3 times higher precision than the MIH approach. In other words, $C_{check}$ of MIH is around~3 times larger than LSH-based approaches. In addition, fcLSH achieves slightly better precision than both bcLSH and classic LSH. Classic LSH shows a tradeoff between precision and recall rate where the one with recall ratio~99\% has lower precision than that of recall ratio~90\%. Regarding recall ratio, both CoveringLSH schemes and MIH achieve perfect recall whereas classic LSH obtains a high recall ratio (at least~97.5\%) but not~100\%.
%
%
%
%


Figure~\ref{fig:SyntheticPreprocessing} shows the precision/recall rate of fcLSH and bcLSH with preprocessing tricks (replication and partition) and other algorithms. We replicated \{4, 3, 2, 2\} times corresponding to $r = 2,3,4,5$, respectively. This leads to space overhead $L = 511, 1023, 511, 2047$ for LSH-based methods compared to $L = 8$ of MIH, and explains why the precision ratio of MIH fluctuates. The results are very similar to the case without a pre-processing step: LSH schemes show their superiority compared to MIH, fcLSH has slightly higher precision than bcLSH, and the classic LSH approach always introduces false negatives. 
%
%
%
%
We used~2 partitions for $r =10, 12, 14, 16$. The number of hash tables is $L = 126, 254, 510, 1022$ for LSH-based methods, and $L = 8$ for MIH. Again, LSH-based approaches outperform MIH regarding precision ratio. However, the precision of $r$-covering approaches is worse than classic LSH approaches since the partition trick introduces more unexpected collisions. This difference is at most $n^{\ln{(4)}}$ in the worst-case data sets as analyzed in~\cite{Pagh_SODA16}.

Figure~\ref{fig:SyntheticTime} concludes the experimental results on synthetic data sets by showing the hash function computation time per query between two approaches: fcLSH and bcLSH. It is clear that fcLSH gives substantially faster hash function computation time due to the fast Hadamard transform for a wide range of $d$ and $r$. 

\subsection{Real-world Data Sets}

The experiments on synthetic data sets illustrate that fcLSH achieves better performance than bcLSH: less hash function computation time and higher precision with total recall. Hence, we now use fcLSH as the representative of $r$-covering LSH to compare to other approaches on real-world data sets. Since the recall ratios of classic LSH with $\delta = 0.1$ and with $\delta = 0.01$ are almost the same and very high, we only use the classic LSH with $\delta = 0.1$ for comparison. 

We observe that the replication trick often results in more collisions since it uses more hash tables. In practice, the pruning power of LSH-based approaches is primarily dependent on the distance distribution between data points and query points. 
Moreover, the space usage for indexes is usually limited by RAM. This requires the query radius~$r$ to be rather small (say, up to~10) for large data sets (up to~1M points). Therefore, we do not usually need the pre-processing step for small $r$ and only use the partition trick for large $r$. 

As discussed in Subsection~\ref{sec:expsetting}, we used the total number of collisions and the distinct candidate set size, denoted by \texttt{\#Collisions} and \texttt{\#Candidates}, respectively, to measure separately the two main costs $C_{lookup}$ and $C_{check}$. Due to memory constraints we only consider search radius up to~20 on all data sets, except the Enron data set. We use~1 partition (without pre-processing step) for $r < 10$ with $L = 2^{r+1} - 1$, and~2 partitions for $r \geq 10$ with $L = 2(2^{\left\lfloor r/2 \right\rfloor + 1} - 1)$ for LSH-based methods. For MIH, we used the standard setting, i.e., $L = \left\lceil d / \log_2{n}\right\rceil$ hash tables. 

\begin{figure} [t]
\centering
\includegraphics[width=1.0\columnwidth]{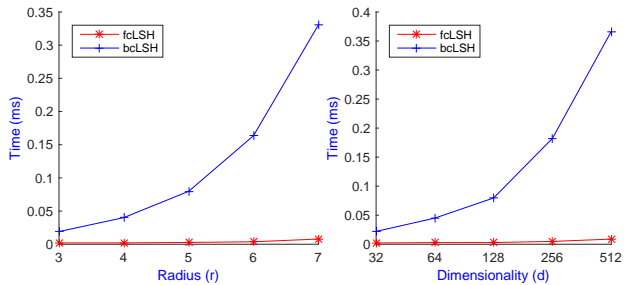}
\caption{Comparison of hash function computation time per query between fcLSH and bcLSH on synthetic data sets of $n = 64K$: $d = 128$ and $r = 3 - 7$ (left), and $r = 5$ and $d = 32 - 512$ (right).}
\label{fig:SyntheticTime}
\end{figure}

\subsubsection{Low-dimensional data sets}

This subsection compares the performance of~4 approaches: covering LSHs (fcLSH and bcLSH), classic LSH with $\delta = 0.1$, and MIH on the ANN\_SIFT1M (images) and Webspam (texts) data sets. Since we aim at measuring the efficiency of these algorithms in low-dimensional space, we generated binary data sets of \{64, 128\} bits for ANN\_SIFT1M, and \{256, 512\} bits for Webspam. 
Due to similar results on both data sets, we only report representative recall ratios of ANN\_SIFT1M~64 bits for small radii $r = 5 - 9$, as shown in Table~\ref{tb:recall_SIFT}. The results confirm that classic LSH cannot avoid false negatives while the other approaches do.
\begin{table}[t]
\centering
\caption{\ Recall ratios on ANN\_SIFT1M 64 bits}
\vspace{2mm}
\label{tb:recall_SIFT} 
\begin{tabular}{|c|c|c|c|c|c|} \hline
Radius & 5 & 6 & 7 & 8 & 9   \\ \hline \hline
fcLSH / MIH & 1 & 1 & 1 & 1 & 1 \\ \hline
Classic LSH & 0.96 & 0.94 & 0.93 & 0.93 & 0.92  \\ \hline
\end{tabular}
\end{table}

\begin{figure*} [!t]
\centering
\includegraphics[width=1.\textwidth]{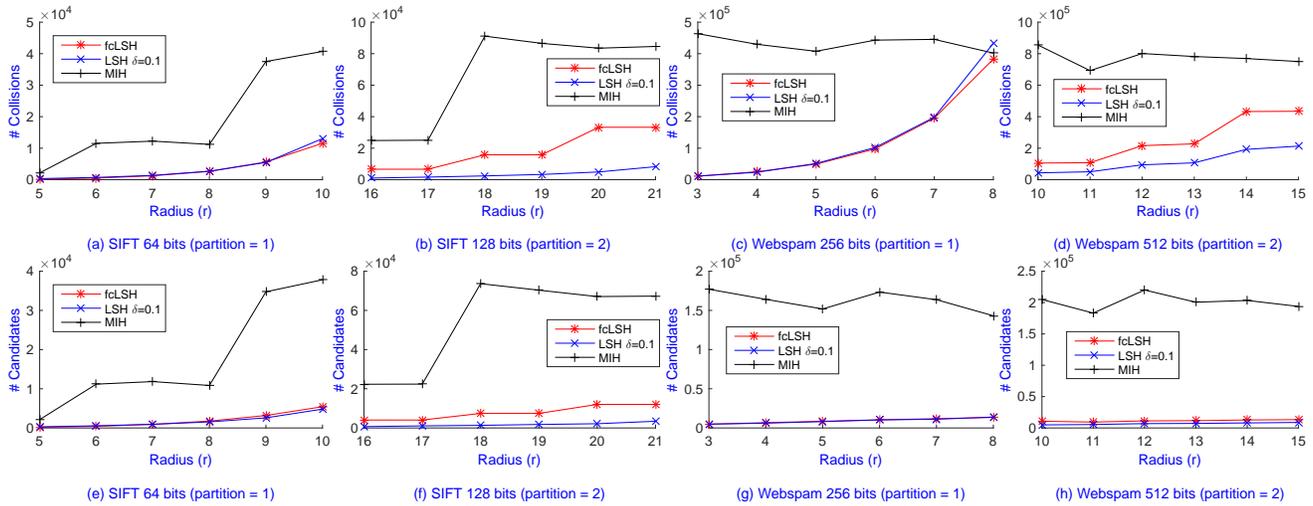}
\caption{Comparison of the number of collisions and distinct candidate set size for fcLSH, MIH, classic LSH with $\delta = 0.1$ on two data sets: ANN\_SIFT1M and Webspam.}
\label{fig:SIFT_Webspam_Cost}
\end{figure*}

\begin{figure*} [t]
\centering
\includegraphics[width=1.0\textwidth]{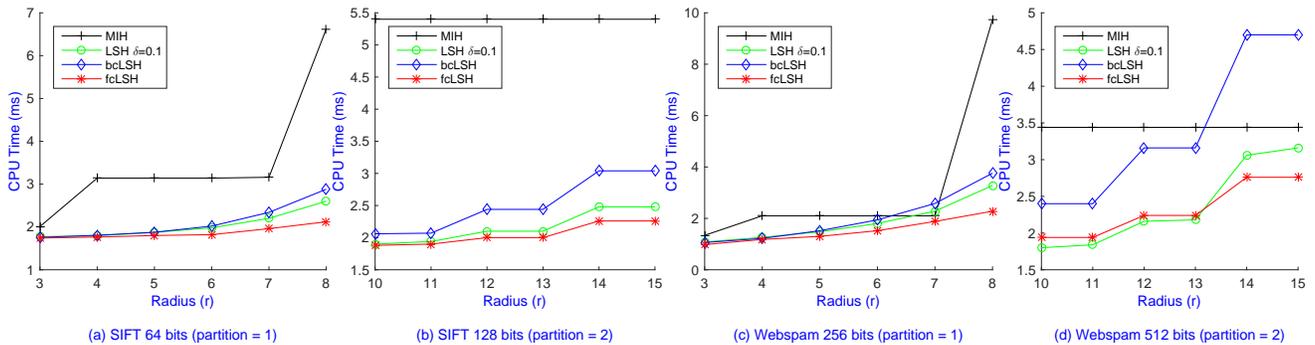}
\caption{Comparison of CPU Time (ms) per query of~4 approaches: fcLSH, bcLSH, classic LSH with $\delta = 0.1$, and MIH on two data sets: ANN\_SIFT1M and Webspam.}
\label{fig:SIFT_Webspam_CPU}
\end{figure*}

Figure~\ref{fig:SIFT_Webspam_Cost} shows the two main costs per query on the ANN\_SIFT1M and Webspam data sets with different dimensionality. Since fcLSH and bcLSH have the same hash values, the two main costs are identical. It is obvious that LSH-based approaches outperform the MIH approach on the ANN\_SIFT1M data set. For the 64-bit version, \texttt{\#Candidates} and \texttt{\#Collisions} for MIH are dramatically larger than for fcLSH and classic LSH. In particular, MIH's \texttt{\#Collisions} is up to around~7 times larger than that of LSH-based approaches. The largest gap starts at $r = 9$. This is because MIH uses~3 hash tables and $r \geq 9$ indicates a new radius $r' = 3$ for each partition. This change influences both \texttt{\#Candidates} and \texttt{\#Collisions} of MIH. 

As the theoretical analysis shows that $r$-covering LSH schemes and classic LSH have  similar pruning power for far points, their performance, including \texttt{\#Candidates} and \texttt{\#Collisions}, are very similar for $r = 5 - 10$. For~2 partitions, CoveringLSH is slightly worse than classic LSH due to the probability of splitting distances unevenly over the partitions. However, when we target to approach~100\% recall ratio, covering LSH schemes clearly outperform MIH, obtaining up to~7 and~14 times speedup regarding \texttt{\#Collisions} and \texttt{\#Candidates}, respectively.

On the Webspam data set, \texttt{\#Candidates} returned by MIH is orders of magnitude larger than for the LSH-based approaches. This is because $d$ is rather large, so the estimated cost of MIH, $\BO{(d/r)^r}$, tends to be very large, even comparable to the data set size. Hence, in terms of guaranteeing perfect recall, CoveringLSH provides superior performance compared to MIH. Compared to classic LSH, the performance of CoveringLSH is similar when using~1 partition and slightly worse with~2 partitions. In particular, \texttt{\#Candidates} and \texttt{\#Collisions} provided by fcLSH using~2 partitions is approximately twice that of classic LSH.

Figure~\ref{fig:SIFT_Webspam_CPU} shows the superiority of LSH-based methods (fcLSH, bcLSH, and classic LSH with $\delta = 0.1$) to the MIH method with respect to the average CPU time per query in milliseconds on the ANN\_SIFT1M and Webspam data sets. We note that the standard setting of MIH requiring number of hash tables $L = \left\lceil d / \log_2{n}\right\rceil$ does not result in a good performance since the real-world data sets are not uniformly distributed. For the sake of comparison, we choose $L = 4, 8$ corresponding to the two different versions of these data sets, that leads to the best performance of MIH. Even in such least favorable scenario, LSH-based approaches still run at least~2 times faster than MIH on the ANN\_SIFT1M data sets. On the~64-bit version, since the cost $C_{check}$ and $C_{lookup}$ of LSH-based approaches are very similar, fcLSH provides superior performance compared to bcLSH and classic LSH due to the fast hash computation. For $r = 6 - 8$, bcLSH is slightly slower than classic LSH. This is because the increase in the of number of hash tables, $L = 2^{r+1} - 1$ leads to a slightly larger gap in hash computation time, $dL$ of bcLSH compared to $kL$ of classic LSH. On the~128-bit version, classic LSH is favorably compared to bcLSH because \texttt{\#Candidates} and \texttt{\#Collisions} provided by bcLSH considerably increase due to partitioning. However, fcLSH still gains substantial advantages from the fast hash computation and outperforms bcLSH and classic LSH.

On the Webspam~256-bit dataset, MIH is slightly slower than LSH-based approaches for small radii $r = 3 - 7$. This CPU time gap is more significant at the radius $r = 8$ because this new radius yields to a new radius $r' = 2$ on each partition of MIH, noting that MIH uses $L = 4$. This degrades the performance of MIH due to the significant growth of \texttt{\#Candidates} and \texttt{\#Collisions}. It is worth noting that this observation is also illustrated in Figure~\ref{fig:SIFT_Webspam_Cost} when MIH uses the standard setting $L = \left\lceil d / \log_2{n}\right\rceil$. On the Webspam~512-bit version, both fcLSH and classic LSH outperform MIH for $r = 10 - 15$. Moreover, fcLSH is comparable to classic LSH for $r = 10 - 13$, but is superior for $r = 14 - 15$ since the hash computation time dramatically contributes to the total cost. That also explains why bcLSH is worse than MIH on this parameter setting. In general, fcLSH is favorable compared to the other approaches regarding both CPU time and total recall.


%
\begin{figure*} [t]
\centering
\includegraphics[width=1.0\textwidth]{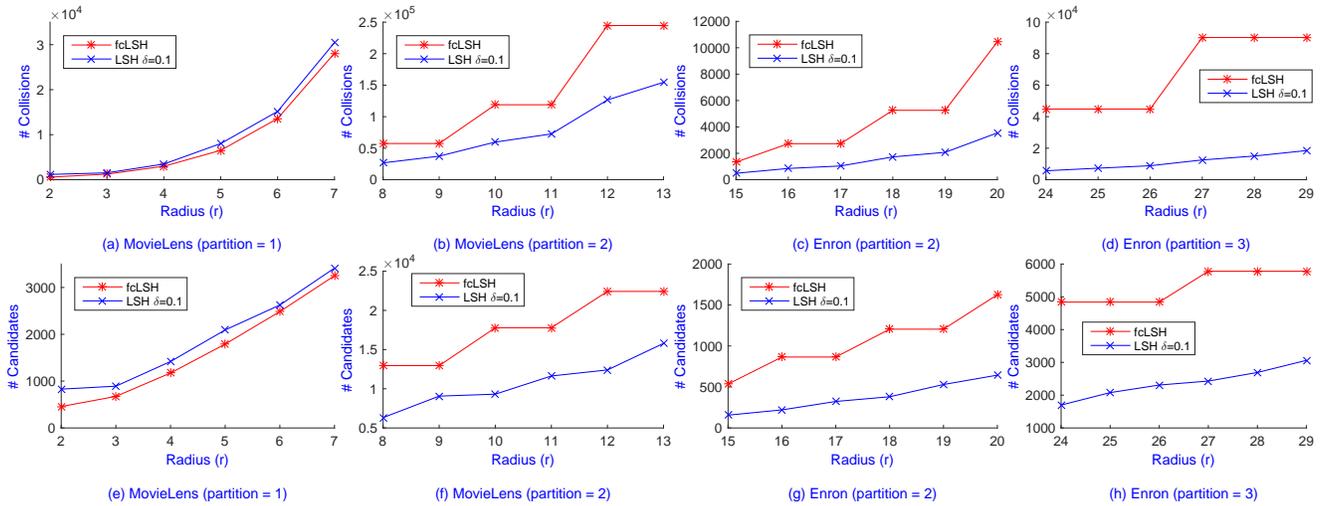}
\caption{Comparison of the number of collisions and distinct candidate set size between fcLSH and classic LSH with $\delta = 0.1$ on two data sets: MovieLens and Enron.}
\label{fig:Enron_MovieLens_Cost}
\end{figure*}

\begin{figure*} [t]
\centering
\includegraphics[width=1.0\textwidth]{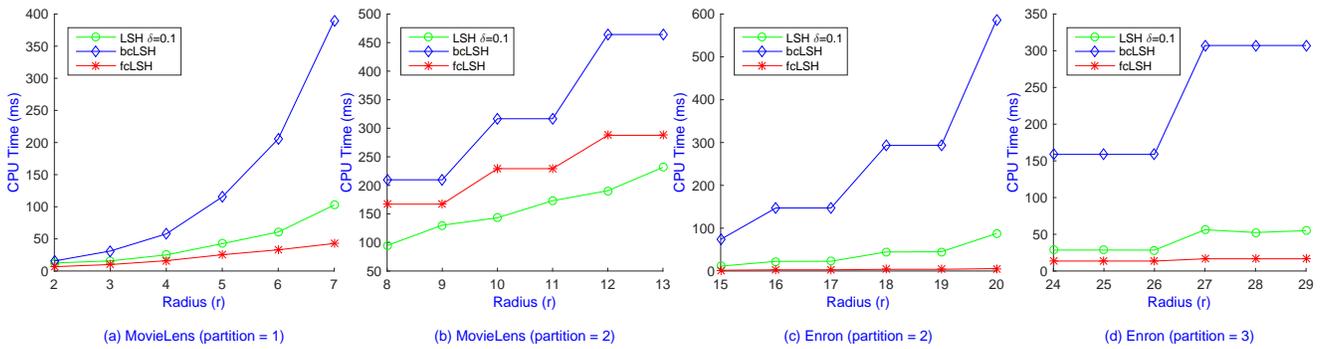}
\caption{Comparison of CPU Time (ms) per query of~3 approaches: fcLSH, bcLSH, and classic LSH with $\delta = 0.1$ on two data sets: Movielens and Enron.}
\label{fig:Enron_MovieLens_CPU}
\end{figure*}

\subsubsection{High-dimensional data sets}

This subsection studies the ability of scale and accuracy of~3 approaches, fcLSH, bcLSH and classic LSH with $\delta = 0.1$ on the two high-dimensional binary data sets: MovieLens and Enron. Since the data sets are very high-dimensional, the MIH approach is outperformed by the simple linear search and we do not report the results for MIH here. Due to similar results on the two data sets, we report representative recall ratios of MovieLens for small radii $r = 3 - 7$, as shown in Table~\ref{tb:recall_movielens}. The results once again confirm that fcLSH always eliminates false negatives while classic LSH cannot.
\begin{table}[t]
\centering
\caption{\ Recall ratios on Movielens}
\label{tb:recall_movielens} 
\vspace{2mm}
\begin{tabular}{|c|c|c|c|c|c|} \hline
Radius & 3 & 4 & 5 & 6 & 7   \\ \hline \hline
fcLSH & 1 & 1 & 1 & 1 & 1 \\ \hline
Classic LSH & 0.97 & 0.99 & 0.99 & 0.98 &  0.98  \\ \hline
\end{tabular}
\vspace{-1mm}
\end{table}

For the MovieLens data set, we use~1 and~2 partitions for $r = 2 - 7$ and $r = 8 - 13$, respectively. Since the Enron data set is rather small, we can use~3 partitions and measure the performance of fcLSH with radius up to~29. It is worth noting that the data sets are very high-dimensional and distance computation is time consuming, the cost $C_{check}$ dominates the cost $C_{lookup}$. Hence, we focused on discussing \texttt{\#Candidates} and CPU Time (ms) per query of the~3 approaches. We again used \texttt{\#Collisions} and \texttt{\#Candidates} to measure the costs $C_{lookup}$ and $C_{check}$, respectively, of the~3 approaches, as shown in Figure~\ref{fig:Enron_MovieLens_Cost}. This result again supports our theoretical comparison of fcLSH to classic LSH. \texttt{\#Candidates} of CoveringLSH is slightly smaller than classic LSH in the case of~1 partition on MovieLens but up to~3 times larger when using~2 and~3 partitions on Enron. 

Figure~\ref{fig:Enron_MovieLens_CPU} shows superiority of fcLSH to bcLSH with respect to the CPU time in milliseconds due to the fast hash computation time on the two data sets. On the MovieLens data set, fcLSH is faster than classic LSH with~1 partition but is slower with~2 partitions. This is because \texttt{\#Candidates} provided by fcLSH using~2 partitions is approximately twice larger than that of classic LSH. However, on the Enron data set, fcLSH outperforms classic LSH even though it uses partitioning trick. This is due to the fact that \texttt{\#Candidates} on Enron is rather small and the hash computation time dominates the total running time. In conclusion, fcLSH is favorably compared to classic LSH but is superior to both MIH and bcLSH in settings requiring precise performance guarantees.

\full{
\section{Related Work}\label{sec:relatedwork}

Due to the ``curse of dimensionality'', one typically uses linear search for (exact) near neighbor search in high-dimensional Hamming space~\cite{Norouzi_NIPS12, Salakhutdinov_IJAR09, Torralba_CVPR08}. To trade precision for speed, \textit{approximate} retrieval is widely investigated in the research literature, and LSH~\cite{Indyk_STOC98} is a widely used technique due to its attractive ``tradeoff'' between time and space. However, false negatives findings limit the applicability of LSH in settings requiring precise performance guarantees.


Recently, Norouzi et. al.~\cite{Norouzi_PAMI14} proposed the MIH approach which partitions each data vector to reduce the search radius, and then applies exhaustive search. Although the MIH approach has sub-linear running time behaviour for \emph{uniformly} distributed data sets, it does not work well in general. This is because its performance relies on the ability to select a small number of random bit positions (around $\log n$) for which there are almost no collisions between the query point and points in the data set -- an assumption that is not true in general. An approach similar to MIH was taken in~\cite{Liu_ICDE11}, with the same vulnerability. We have chosen to compare against MIH as a representative of these schemes.

Arasu et al.~\cite{Arasu_VLDB06} proposed the idea that randomly permuting the dimensions of data vectors increases the robustness of partitioning, and make performance guarantees possible for data sets that are not uniformly distributed. They combined this idea with another level of partitioning within which a ``brute force'' $r$-cover is found. For $r > 2$ the scheme is never better than CoveringLSH~\cite{Pagh_SODA16}. In the case of a single partition the number of hash values needed by~\cite{Arasu_VLDB06} is $\binom{2r}{r}\approx 4^r/\sqrt{r}$, which is much larger than $2^{r+1}$ required by our Hadamard code-based method for the same filtering efficiency. In the case of several partitions, Arasu et al. leave it unspecified how to best choose the parameters of their method, so it is really a family of methods. For these reasons we have not implemented this method.
}

\short{
\section{Related Work}\label{sec:relatedwork}



Due to the ``curse of dimensionality'', one typically uses linear search for (exact) near neighbor search in high-dimensional Hamming space~\cite{Norouzi_NIPS12, Salakhutdinov_IJAR09, Torralba_CVPR08}. To trade precision for speed, \textit{approximate} retrieval is widely investigated in the research literature, and LSH~\cite{Indyk_STOC98} is a widely used technique due to its attractive ``tradeoff'' between time and space. However, false negatives findings limit the applicability of LSH in settings requiring precise performance guarantees.

Exact search in Hamming space has recently attracted research attention since there is growing interest in learning binary codes for large-scale image search and recognition. Recently, Norouzi et al.~\cite{Norouzi_PAMI14} proposed the MIH approach which partitions each data vector to reduce the search radius, and then applies exhaustive search. Although MIH has sub-linear running time behaviour for \emph{uniformly} distributed data sets, it does not work well in general. This is because its performance relies on the ability to select a small number of random bit positions (around $\log n$) for which there are almost no collisions between the query point and points in the data set -- an assumption that is not true in general. \full{An approach similar to MIH was taken in~\cite{Liu_ICDE11}, with the same vulnerability. We have chosen to compare against MIH as a representative of these schemes.}

Arasu et al.~\cite{Arasu_VLDB06} proposed the idea that randomly permuting the dimensions of data vectors increases the robustness of partitioning, and make performance guarantees possible for data sets that are not uniformly distributed. They combined this idea with another level of partitioning within which a ``brute force'' $r$-cover is found. Recently, Pagh~\cite{Pagh_SODA16} proposed CoveringLSH, as described in Section~\ref{sec:coveringLSH}, which is always better than the method of Arasu et al.~when $r > 2$. This is because the number of hash values needed by~\cite{Arasu_VLDB06} is $\binom{2r}{r}\approx 4^r/\sqrt{r}$, which is much larger than $2^{r+1}$ required by $r$-covering scheme. 
In the case of several partitions, Arasu et al.~leave it unspecified how to best choose the parameters of their method, so it is really a family of methods. For these reasons we have not implemented this method.
}

\section{Conclusions}\label{sec:conclusion}

This paper proposes \textit{Fast CoveringLSH}, a fast and practical LSH scheme for Hamming space. Inheriting the design benefits from CoveringLSH, our method avoids false negatives and always reports \textit{all} near neighbors. Our main technical contribution is asymptotic improvement to the hash function computation time from $\BO{dL}$ to $\BO{d + L \log{L}}$, for $d$ dimensions and $L$ hash tables. 
Our experiments on synthetic and real-world data sets demonstrate the efficiency of fcLSH in comparison with traditional hashing-based approaches for search radius up to~20 in high-dimensional Hamming space. 

An obvious open direction is to extend our work to other spaces or similarity measures, aiming at rigorous performance guarantees without false negatives. 
Since the recent covering LSH framework demands a large number of hash tables for large radii, another interesting question would be to reduce the space usage to linear (or near-linear) in the data size while maintaining the property of total recall.




%
\bibliographystyle{abbrv}
\bibliography{sigproc}  

%
%

\end{document}